\newtheorem{theorem}{Theorem}[section]
\newtheorem{corollary}{Corollary}[section]
\newtheorem{lemma}{Lemma}[section]
\newtheorem{proposition}{Proposition}[section]
\newtheorem{remark}{Remark}[section]
\newtheorem{definition}{Definition}[section]
\newcommand{\bthm}{\begin{theorem}}
\newcommand{\ethm}{\end{theorem}}
\newcommand{\blem}{\begin{lemma}}
\newcommand{\elem}{\end{lemma}}
\newcommand{\bex}{\begin{example}}
\newcommand{\eex}{\end{example}}
\newcommand{\bprop}{\begin{proposition}}
\newcommand{\eprop}{\end{proposition}}
\newcommand{\bplm}{\begin{problem}}
\newcommand{\eplm}{\end{problem}}
\newcommand{\bmrk}{\begin{remark}}
\newcommand{\emrk}{\end{remark}}
\newcommand{\bdfn}{\begin{definition}}
\newcommand{\edfn}{\end{definition}}
\newcommand{\bcor}{\begin{corollary}}
\newcommand{\ecor}{\end{corollary}}
\newcommand{\beq}{\begin{equation}}
\newcommand{\eeq}{\end{equation}}
\newcommand{\beqm}{\begin{equation*}}
\newcommand{\eeqm}{\end{equation*}}
\newcommand{\beqn}{\begin{eqnarray}}
\newcommand{\eeqn}{\end{eqnarray}}
\newcommand{\beqnm}{\begin{eqnarray*}}
\newcommand{\eeqnm}{\end{eqnarray*}}
\newcommand{\bea}{\begin{align}}
\newcommand{\eea}{\end{align}}
\newcommand{\beam}{\begin{align*}}
\newcommand{\eeam}{\end{align*}}
\newcommand{\bs}{\begin{subequations}}
\newcommand{\es}{\end{subequations}}
\newcommand{\bei}{\begin{itemize}}
\newcommand{\eei}{\end{itemize}}
\newcommand{\bed}{\begin{description}}
\newcommand{\eed}{\end{description}}
\newcommand{\bee}{\begin{enumerate}}
\newcommand{\eee}{\end{enumerate}}
\newcommand{\bey}{\begin{array}}
\newcommand{\eey}{\end{array}}
\newcommand{\beb}{}
\newcommand{\mbb}{\mathbb}
\newenvironment{proof}{{\noindent\it Proof}\quad}{\hfill $\square$\par}
\def\BibTeX{{\rm B\kern-.05em{\sc i\kern-.025em b}\kern-.08em
    T\kern-.1667em\lower.7ex\hbox{E}\kern-.125emX}}
\def\ff{\frac}
\newcommand{\la}{\label}
\def\i{{\rm i}}
\begin{document}

\title{On the Dynamics of the Tavis-Cummings Model}

\author{Zhiyuan Dong, Guofeng Zhang, Ai-Guo Wu, and Re-Bing Wu
\thanks{Zhiyuan Dong is with School of Science, Harbin Institute of Technology, Shenzhen, Shenzhen, China (e-mail: dongzhiyuan@hit.edu.cn).}
\thanks{Guofeng Zhang is with the Department of Applied Mathematics, The Hong Kong Polytechnic University, Hong Kong. He is also with The Hong Kong Polytechnic University Shenzhen Research Institute, Shenzhen, 518057, China (e-mail: guofeng.zhang@polyu.edu.hk).}
\thanks{Ai-Guo Wu is with School of Mechanical Engineering and Automation, Harbin Institute of Technology, Shenzhen, Shenzhen, China (e-mail: agwu@hit.edu.cn, ag.wu@163.com).}
\thanks{Re-Bing Wu is with the Center for Intelligent and Networked Systems, Department of Automation, Tsinghua University, Beijing, 100084, China (e-mail: rbwu@tsinghua.edu.cn).}
}

\maketitle

\begin{abstract}
The purpose of this paper is to present a comprehensive study of the Tavis-Cummings model from a system-theoretic perspective. A typical form of the Tavis-Cummings model is composed of an ensemble of non-interacting two-level systems (TLSs) that are collectively coupled to a common cavity resonator. The associated quantum linear passive system is proposed,  whose canonical form reveals typical features of the Tavis-Cummings model, including $\sqrt{N}$- scaling, dark states, bright states,  single-excitation superradiant and subradiant states. The passivity of this linear system is related to  the vacuum Rabi mode splitting phenomenon in Tavis-Cummings systems. On the basis of the linear model, an analytic form is presented for the steady-state output  state of the  Tavis-Cummings model driven by a single-photon state.  Master equations are used to study the excitation properties of the  Tavis-Cummings model in the multi-excitation scenario. Finally, in terms of the transition matrix for a linear time-varying system, a computational framework is proposed for  calculating the state of the Tavis-Cummings model, which is applicable to the multi-excitation case.
\end{abstract}

\begin{IEEEkeywords}
Quantum control, Tavis-Cummings model, two-level systems, open quantum systems.
\end{IEEEkeywords}

%


\section{Introduction}\label{sec:intro}

In 1954, Robert Dicke calculated \cite{Dicke54} that an ensemble of gaseous molecules interacting with a common radiation field could exhibit a coherent spontaneous emission process, during which the molecules act as a giant molecule that  shows  superradiation --- cooperative radiation rate much faster than independent individual radiation rates. This problem was further studied by Tavis and Cummings \cite{TC68}  by means of a model of $N$ identical non-interacting two-level systems (TLSs) coupled to a single-mode quantized radiation field, see Fig. \ref{system} in section \ref{sec:system} for an example. An exact solution of the eigenstates of the Hamiltonian for this model is derived. The model proposed in \cite{TC68}  is called the Tavis-Cummings model in the subsequent literature.  The Tavis-Cummings model has been physically realized by quite a few experimental platforms, including superconducting circuits \cite{SPS07, MCG07,Fink09,WW+PRL20},  NV spin ensembles \cite{ANP+17}, and double quantum dots \cite{Deng2015,Van18,WLL+21}. In the Tavis-Cummings model consisting of $N$ TLSs equally coupled to a  cavity resonator, under certain conditions the collective coupling strength of the ensemble exhibits a $\sqrt{N}$-scaling, which can be experimentally observed from the vacuum Rabi mode splitting  of the resonator transmission spectrum of the bright states. In addition to bright states, a Tavis-Cummings model can also have dark states which contain single-excitation subradiant states as a subclass. Applications of the  Tavis-Cummings model can be found in \cite{GH82,MCG07,Van18,WW+PRL20} and references therein.



In this paper, we aim to introduce the Tavis-Cummings model to the quantum control community and show that many of its typical properties can be uncovered by means of systems theory. The main contributions are summarized below.

In section \ref{subsec:linear} we propose a quantum linear system that is associated to the Tavis-Cummings model. This linear model reveals the $\sqrt{N}$-scaling of the coupling strength of the atomic ensemble. Moreover, a transfer function is defined for a performance variable for which the system is passive, and the transfer function reflects the vacuum Rabi mode spitting of the Tavis-Cummings model.   Finally, the structural decomposition of the linear model shows that the bright states of the Tavis-Cummings model live in the controllable and observable subspace, whereas the dark states reside in the uncontrollable and unobservable subspace of the linear model.

In section \ref{sec:output} we apply the quantum linear systems theory to derive an analytic form of the output single-photon state of the Tavis-Cummings system driven by a single-photon input.  The simulations in Fig. \ref{distri1} show that the input photon tends not to interact with the atoms when the number of atoms is large, and photon-atom interaction is easier when atoms are non-resonant. On the other hand, when only one of  the two-level atoms is initially excited and the input field is vacuum, an analytic form of the joint system-field state is given in section \ref{sec:super}, which explains several experimental observations including superradiance and subradiance \cite{WW+PRL20}. 

In section \ref{sec:excitation}, we study the excitations of TLSs by means of master equations. When all the $N$ atoms are initially in the excited state, we prove that eventually they all  settle to the ground state and the output field is in an $N$-photon state, provided that the coupling strengths are identical.  

In our preliminary study \cite{LZW20},  a computational framework is proposed to calculate the joint system-field state of a general open quantum system. Here, we develop it further in section \ref{sec:alter} and apply it to the study of the  Tavis-Cummings model. In particular, we derive  the exact form of 2- and 3-photon states. 

{\it Notation.} The reduced Planck constant $\hbar$ is set to $1$. $\mathrm{i}=\sqrt{-1}$ is the imaginary unit, $\delta_{ij}$ denotes the Kronecker delta function, and $\delta(t-r)$ is the Dirac delta function. $I$ is the identity matrix, and $\mathbf{0}$ is the zero vector or matrix whose dimension can be easily determined from the context.  Given a column vector of complex numbers or operators $X=[x_1, \ldots, x_n]^\top$, the complex conjugate or adjoint operator of $X$ is denoted by $X^\#=[x_1^\ast, \ldots, x_n^\ast]^\top$. Let $X^\dagger=(X^\#)^\top$. Clearly, when $n=1$, $X^\dag=X^\ast$.  Let $t_0$ be the initial time, i.e., the time when the system and its input start interaction. $|g\rangle$ and $|e\rangle$ stand for the ground and excited states of a two-level atom, respectively. The commutator between two operators $A$ and $B$ is $[A,B]=AB-BA$. Define super-operators
 $ \mathcal{L}X\triangleq-\mathrm{i}[X,H]+\mathcal{D}_LX$ and  $ \mathcal{L}^\star\rho\triangleq-\mathrm{i}[H,\rho]+\mathcal{D}_L^\star\rho$,
where $\mathcal{D}_LX=L^\dagger XL-\frac{1}{2}L^\dagger LX-\frac{1}{2}XL^\dagger L$ and  $\mathcal{D}_L^\star\rho=L\rho L^\dagger-\frac{1}{2}L^\dagger L\rho-\frac{1}{2}\rho L^\dagger L$.

\section{Preliminaries}\label{sec:pre}


\subsection{Quantum systems  and fields} \label{subsec:sys+field}


Consider a  quantum system driven by $m$ input fields. The inputs  are optical or microwave fields, which are represented by the annihilation operators $b_{\mathrm{in}, k}(t)$ and their adjoints $b_{\mathrm{in}, k}^\ast(t)$ (creation operators),  $k=1,\ldots,m$. If there are no photons in an input channel, this input is in the vacuum state $\ket{\Phi_0}$. Annihilation and creation operators satisfy
\beq\label{eq:julu2_ccr}\begin{aligned}
&b_{\mathrm{in}, j}(t)\ket{\Phi_0}=0,\  [b_{\mathrm{in}, j}(t),b_{\mathrm{in}, k}(r)]=[b_{\mathrm{in}, j}^\ast(t),b_{\mathrm{in}, k}^\ast(r)]=0,
\\
&[b_{\mathrm{in}, j}(t),b_{\mathrm{in}, k}^\ast(r)]=\delta_{jk}\delta(t-r),  \forall j,k=1,\ldots,m, \ t,r\in\mbb{R}.
\end{aligned}\eeq
The integrated input annihilation and creation processes are respectively $B_{\mathrm{in}, k}(t)=\int_{-\infty}^t b_{\mathrm{in}, k}(r)dr$ and $B_{\mathrm{in}, k}^\ast(t)=\int_{-\infty}^t b_{\mathrm{in}, k}^\ast(r)dr$,
which are quantum Wiener processes. Define It\^{o} increments $dB_{\mathrm{in}, k}(t)=B_{\mathrm{in}, k}(t+dt)-B_{\mathrm{in}, k}(t)$. Denote $B_{\mathrm{in}}(t) = [B_{\mathrm{in}, 1}(t), \ldots, B_{\mathrm{in}, m}(t)]^\top$.  In this paper, the input fields are assumed to be canonical fields which include the vacuum, coherent, single- and multi-photon fields.  Then $
dB_{\mathrm{in}}(t)dB_{\mathrm{in}}^\top(\tau) = dB_{\mathrm{in}}^\#(t) dB_{\mathrm{in}}^\dag (\tau)=dB_{\mathrm{in}}^\#(t) dB_{\mathrm{in}}^\top(\tau)=\mathbf{0}$ and $dB_{\mathrm{in}}(t)dB_{\mathrm{in}}^\dag(\tau) =  I\delta_{t\tau} dt$.

The quantum system  can be parametrized by a triple $(S,L,H)$ \cite{GJ09,CKS17}. Here, $H$ is the inherent system Hamiltonian, $L = [L_1,  \ldots, L_m]^\top$ describes how the system is coupled to its environment, and $S$ is a scattering operator (e.g., a beamsplitter or a phase shifter). In this paper, it is assumed that $S=I$ (the identity operator).  The temporal evolution of the quantum system is governed by a unitary operator $U(t,t_0)$, which is the solution to the following It\^{o} quantum stochastic differential equation (QSDE):
\begin{equation*}\label{dU}\begin{aligned}
&d U(t,t_0) =\left[-(\mathrm{i}H+\ff{1}{2}L^\dag L)dt \right. \\
&\hspace{14ex}\left. +d B_{\mathrm{in}}^\dag(t) L - L^\dag
d B_{\mathrm{in}}(t)\right] U(t,t_0)
\end{aligned}\end{equation*}
under the initial condition $U(t_0,t_0)=I$. Denote the joint system-field state by $\ket{\Psi(t)}$. In the Schr\"odinger picture, $\ket{\Psi(t)}=U(t,t_0)|\Psi(t_0)\rangle$,
which is the solution to the stochastic Schr\"odinger equation
\begin{equation}\label{SME}
d\ket{\Psi(t)} =\left[-\mathrm{i} H_{\rm eff} dt +d B_{\mathrm{in}}^\dag(t) L -L^\dag
d B_{\mathrm{in}}(t)\right] \ket{\Psi(t)},
\end{equation}
where $H_{\rm eff} =  H-\ff{\mathrm{i}}{2}L^\dag L$
is the effective Hamiltonian that is not self-adjoint \cite[Chapter 11]{GZ04}. In particular, if the input fields are initially in the vacuum state $\ket{\Phi_0}$, then \eqref{SME} reduces to, \cite[Chapter 11]{GZ04},
\begin{equation}\label{SME_b}
d\ket{\Psi(t)} = \left(-\mathrm{i} H_{\rm eff} dt +d B_{\mathrm{in}}^\dag(t) L\right) \ket{\Psi(t)}.
\end{equation}
On the other hand, in the Heisenberg picture, the time evolution of the system operator $X$, denoted by $\mathfrak{j}_t(X)\equiv X(t)=U^\ast(t,t_0)(X\otimes I_{\mathrm{field}})U(t,t_0)$,
follows the It\^{o} QSDE
\begin{equation*}\begin{aligned}
&d\mathfrak{j}_t(X)=\mathfrak{j}_t(\mathcal{L}X)dt+\sum_{k=1}^m dB_{\mathrm{in},k}^\ast(t)\mathfrak{j}_t([X,L_k])
\\
&\hspace{8ex} +\sum_{k=1}^m \mathfrak{j}_t([L_k^\ast,X])dB_{\mathrm{in},k}(t).
\end{aligned}\end{equation*}
Finally, the output field annihilation operators  are $B_{\mathrm{out},k}(t) = U^\ast(t,t_0) B_{\mathrm{in},k}(t)  U(t,t_0)$, $(k=1,\ldots,m)$,
whose dynamical evolution is $dB_{\mathrm{out}}(t)=L(t)dt+dB_\mathrm{in}(t)$.
More discussions on open quantum systems can be found in, e.g., \cite{GZ04,WM09,GJ09,CKS17}.

\subsection{Continuous-mode single-photon states}\label{subsec:eq:photon}

For each input channel $k=1,\ldots,m$, the creation operator $b_{\mathrm{in}, k}^\ast$  generates a photon from the vacuum. Mathematically, $|1_{k,t}\rangle \triangleq b_{\mathrm{in}, k}^\ast (t)\ket{\Phi_0}$ means a photon is generated at time $t$ in the $k$th input channel.  By  \eqref{eq:julu2_ccr},  $\langle1_{j,t}|1_{k,\tau}\rangle =\delta_{jk}\delta(t-\tau)$. Hence $\{|1_{k,t}\rangle: t\in\mbb{R}\}$ is an orthogonal basis of single-photon states for each channel $k$. Indeed, a  single-photon state with temporal pulse shape $\xi(t)$ in the $k$th channel can be viewed as a superposition of a continuum of $|1_{k,t}\rangle$, i.e., 
\begin{equation}\la{1 photon jan21}
|1_{\xi}\rangle = \int_{-\infty}^\infty \xi(t) |1_{k,t}\rangle dt = \int_{-\infty}^\infty \xi(t) b_{\mathrm{in}, k}^\ast(t)dt\ket{\Phi_0}.
\end{equation}
Physically, as a quantum state, $|1_{\xi}\rangle$ can be interpreted in the following way: the probability of finding the photon in the time bin $[t,t+dt)$ is $|\xi(t)|^2 dt$. The normalization condition $\langle1_{\xi}|1_{\xi}\rangle=1$ requires $ \int_{-\infty}^\infty |\xi(t)|^2 dt=1$.  As the single-photon state $|1_{\xi}\rangle$ is parameterized by an $L^2$ integrable function $\xi(t)$ over $\mbb{C}$, it is called a {\it continuous-mode} single-photon state \cite{milburn08,FKS10,GJNC12,ZJ13,GZ15b,PZJ16,PDZ16,DTK+18}.
If a single photon is superposed over $m$ channels, then the single-photon state is a superposition state of the form 
\begin{equation*}\label{eq:1 photon}
|1_\zeta\rangle = \sum_{k=1}^m \int_{-\infty}^\infty \zeta_k(t) |1_{k,t}\rangle dt,
\end{equation*}
whose normalization condition is $\sum_{k=1}^m \int_{-\infty}^\infty |\zeta_k(t)|^2 dt=1$. In particular, if $\zeta_j(t)\equiv 0$ for some $j=1,\ldots, m$, then it means that the input channel $j$ is in the vacuum state and the photon is superposed over the other input channels.

\section{The Tavis-Cummings model}\label{sec:system}

We first present the Tavis-Cummings model in subsection \ref{sec:des}.  In subsection \ref{subsec:linear}, assuming that the atoms are initially in the ground state, the cavity is empty  and the input field is in the vacuum state, we present an associated linear model. Physical interpretation of the linear model is discussed in subsection \ref{subsec:physical}.

\subsection{The Tavis-Cummings model}\label{sec:des}

\begin{figure}[htp!]
\centering
\includegraphics[scale=0.2]{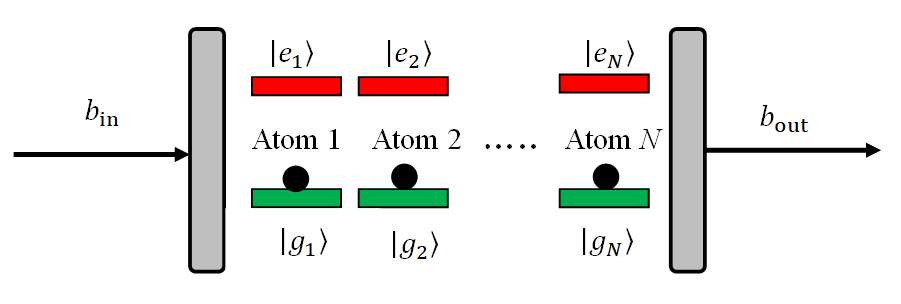}
\caption{Schematic of the Tavis-Cummings model. $N$ two-level atoms are coupled to a single-mode  cavity, which is driven by an input field. (Here, the atoms are inside the cavity. However, in some physical implementations atoms may be outside the cavity, but of course they must couple to the cavity; see e.g., \cite[Fig. 1]{Van18}). The cavity can also be a transmission line in a superconducting platform.}
\label{system}
\end{figure}

In the Tavis-Cummings model as shown in Fig. \ref{system}, the $N$ TLSs are not directly coupled to each other; instead, they all couple to the common single-mode cavity. The inherent system Hamiltonian of the Tavis-Cummings model is (\cite[(2.1)]{TC68}, \cite[(1)]{SPS07}, \cite[(1)]{MCG07},\cite[(1)]{Fink09},\cite[(C7)]{Van18},\cite[(1)]{WW+PRL20})
\begin{equation}\label{H_TC}
H_{\mathrm{TC}}=\omega_r a^\ast a + \sum_{j=1}^{N}\left[\frac{\omega_j}{2}\sigma_{z,j}+\Gamma_j(a^\ast\sigma_{-,j}+\sigma_{+,j}a)\right].
\end{equation}
Here, $a$, $a^\ast$ denote the annihilation and creation operators of the cavity mode satisfying $[a,a^\ast]=I$, $\omega_r$ is the frequency detuning between the cavity mode and the input field. The two-level atom $j$ is coupled to the cavity with coupling strength $\Gamma_j$, $j=1,\ldots,N$, which is assumed to a real number but can be negative \cite[Table 1]{Fink09} or \cite[(2)]{ANP+17}. The corresponding detuning between the transition frequency of the two-level atom $j$ and the carrier frequency of the input field is denoted by $\omega_j$.  The lowering and raising operators of the two-level atom $j$ are $\sigma_{-,j}= \ket{g_j}\bra{e_j}$ and $\sigma_{+,j} = \ket{e_j}\bra{g_j}$, respectively.  The Pauli $Z$ operator is $\sigma_{z,j}=\sigma_{+,j}\sigma_{-,j}-\sigma_{-,j}\sigma_{+,j}$.   
The system exchanges information with its environment by means of absorbing and emitting photons, which is realized by the coupling operator $L=\sqrt{\kappa}a$.  By the development in subsection \ref{subsec:sys+field}, the  It\^{o} QSDEs for the Tavis-Cummings model in the Heisenberg picture are
\begin{equation}\label{system1}
\left\{\begin{aligned}
d\sigma_{-,1}(t)&=-\mathrm{i}\omega_1\sigma_{-,1}(t)dt+\mathrm{i}\Gamma_1\sigma_{z,1}(t)a(t)dt, \\
&\ \vdots \\
d\sigma_{-,N}(t)&=-\mathrm{i}\omega_N\sigma_{-,N}(t)dt+\mathrm{i}\Gamma_N\sigma_{z,N}a(t)dt, \\
da(t)&=-(\mathrm{i}\omega_r+\frac{\kappa}{2})a(t)dt
\\
&\hspace{2ex}-\mathrm{i}\sum_{j=1}^{N}\Gamma_j\sigma_{-,j}(t)dt-\sqrt{\kappa}dB_{\mathrm{in}}(t), \\
dB_{\mathrm{out}}(t)&=\sqrt{\kappa}a(t)dt+dB_{\mathrm{in}}(t), \ t\geq t_0,
\end{aligned}\right.
\end{equation}
which is  bilinear.

\begin{remark}
When all atoms are resonant with the resonator, i.e., $\omega_1=\cdots=\omega_N = \omega_r \equiv\omega_s$,  rotations $\sigma_{-,j}(t)\to e^{\i \omega_s t}\sigma_{-,j}(t)$, $a(t)\to e^{\i \omega_s t}a(t)$, $B_{\mathrm{in}}(t)\to e^{\i \omega_s t}B_{\mathrm{in}}(t)$, and $B_{\mathrm{out}}(t)\to e^{\i \omega_s t}B_{\mathrm{out}}(t)$ convert system \eqref{system1} to
\begin{equation}\label{system1_june28}
\left\{\begin{aligned}
d\sigma_{-,1}(t)&=\mathrm{i}\Gamma_1\sigma_{z,1}(t)a(t)dt, \\
\vdots& \\
d\sigma_{-,N}(t)&=\mathrm{i}\Gamma_N\sigma_{z,N}a(t)dt, \\
da(t)&=-\frac{\kappa}{2}a(t)dt-\mathrm{i}\sum_{j=1}^{N}\Gamma_j\sigma_{-,j}(t)dt-\sqrt{\kappa}dB_{\mathrm{in}}(t), \\
dB_{\mathrm{out}}(t)&=\sqrt{\kappa}a(t)dt+dB_{\mathrm{in}}(t), \ t\geq t_0.
\end{aligned}\right.
\end{equation}
In other words, when all the atoms and the cavity are resonant, the frequencies do not affect the system dynamics, except the central frequency of the radiation field.
\end{remark}

\subsection{The corresponding linear model}\label{subsec:linear}

Assume that all the two-level atoms are initially in the ground state and  the cavity is in the vacuum state $\ket{0}$. That is, the initial state of the Tavis-Cummings model is
\begin{equation}\label{feb3:initial}
|\zeta\rangle=|g_1g_2\cdots g_N\rangle\otimes|0\rangle.
\end{equation}
Let
\begin{equation} \label{eq:jun4_X}
X(t)=\left[
       \begin{array}{ccccc}
         \sigma_{-,1}(t) & \sigma_{-,2}(t) & \cdots & \sigma_{-,N}(t) & a(t) \\
       \end{array}
     \right]^\top.
\end{equation}
Notice that
\beq\label{eq:z_g}
\sigma_{z,j}\ket{\zeta\Phi_0}=-\ket{\zeta\Phi_0}
\eeq
 for all $j=1,\ldots,N$. From \eqref{system1} we get
\begin{equation}\label{system2}\begin{aligned}
 dX(t)\ket{\zeta\Phi_0}&=AX(t)\ket{\zeta\Phi_0} dt+B dB_{\mathrm{in}}(t)\ket{\zeta\Phi_0}, \\
 dB_{\mathrm{out}}(t)\ket{\zeta\Phi_0}&=CX(t)\ket{\zeta\Phi_0}dt+dB_{\mathrm{in}}(t)\ket{\zeta\Phi_0},
\end{aligned}\end{equation}
where
\begin{equation}\label{ABC}\begin{aligned}
&A=-\mathrm{i}\left[
    \begin{array}{ccccc}
      \omega_1 & 0 & \cdots & 0 & \Gamma_1 \\
      0 & \omega_2 & \cdots & 0 & \Gamma_2 \\
      \vdots & \vdots & \ddots & \vdots & \vdots \\
      0 & 0 & \cdots & \omega_N & \Gamma_N \\
     \Gamma_1 & \Gamma_2 & \cdots & \Gamma_N &\omega_r-\frac{\kappa\mathrm{i}}{2}
    \end{array}
  \right], \\
&B=\left[
    \begin{array}{ccccc}
      0 & 0 & \cdots & 0 & -\sqrt{\kappa}
    \end{array}
  \right]^\top,~~
C=-B^\top.
\end{aligned}\end{equation}
\eqref{system2} is a linear system. Actually, a linear quantum system of $N+1$ quantum harmonic oscillators $\bar{a}= [a_1, \ldots, a_N,  a]^\top$  with system Hamiltonian $H=\omega_r a^\ast a + \sum_{j=1}^{N}\left[\omega_ja_j^\ast a_j+\Gamma_j(a^\ast a_j+a_j^\ast a)\right]$
and coupling operator $L=\sqrt{\kappa}a$ has the following linear It\^o QSDEs
\begin{equation}\label{linear_system}\begin{aligned}
d\bar{a}(t)&=A\bar{a}(t)dt+BdB_{\mathrm{in}}(t), \\
dB_{\mathrm{out}}(t)&=C\bar{a}(t)dt+ dB_{\mathrm{in}}(t),
\end{aligned}\end{equation}
where $A,B,C$ are exactly those in  \eqref{ABC}. More discussions on linear quantum systems theory can be found in e.g., \cite{GZ15,NY17,ZGPG18}.  
The transfer function of the linear quantum system \eqref{linear_system} is
\begin{equation}\label{tran}
G[s]=1+C(sI-A)^{-1}B.
\end{equation}
Plugging \eqref{ABC} into \eqref{tran} yields
\begin{equation}\label{transfer}\begin{aligned}
G[s]
=\frac
{\displaystyle{\sum_{k=1}^N}\left(\Gamma_k^2 +\frac{1}{N}(s+\mathrm{i}\omega_k)(s+\mathrm{i}\omega_r-\frac{\kappa}{2})\right) \prod_{j\neq k}^{N}(s+\mathrm{i}\omega_j)}
{\displaystyle{\sum_{k=1}^N}\left(\Gamma_k^2 +\frac{1}{N}(s+\mathrm{i}\omega_k)(s+\mathrm{i}\omega_r+\frac{\kappa}{2})\right) \prod_{j\neq k}^{N}(s+\mathrm{i}\omega_j)}.
\end{aligned}\end{equation}
If $\omega_1=\cdots=\omega_N\equiv\omega_s$,  \eqref{transfer} reduces to
\begin{equation}\label{eq:transfer_3}
G[s]=\frac
{(\sqrt{N}\bar{\Gamma})^2 + (s+\mathrm{i}\omega_s)(s+\mathrm{i} \omega_r -\frac{\kappa}{2})}
{(\sqrt{N}\bar{\Gamma})^2 + (s+\mathrm{i}\omega_s)(s+\mathrm{i} \omega_r +\frac{\kappa}{2})},
\end{equation}
where $\bar{\Gamma}\triangleq\sqrt{\frac{1}{N}\sum_{k=1}^N \Gamma_k^2}$.
Let $\omega_r=\omega_s=0$, i.e., all atoms are resonant with the cavity resonator. Define  $T[s] \triangleq G[s]-1$. Then
\beq\la{eq:T_s2}
|T[\i\omega]|^2 = \ff{\kappa^2 \omega^2}{[(\sqrt{N}\bar{\Gamma})^2-\omega^2]^2+\ff{\kappa^2}{4}\omega^2}.
\eeq
Clearly, $|T[0]|^2=0$ and $|T[\i\omega]|^2$ has two peaks attained at $\omega=\pm \sqrt{N}\bar{\Gamma}$, respectively. (If $N=0$, then $|T[\i\omega]|^2$ has only one peak attained at $\omega=0$, which is the empty cavity case.)

\bmrk\label{rem:passivity}
Suppose energy enters the system via the input field $B_{\rm{in}}$ and flows out through the output field $B_{\rm{out}}$. $T[s]$ is related to the energy stored in the system. In fact, define the performance variable $z\triangleq C\bar{a}$. Then $T[s]$ is the transfer function from $B_{\rm{in}}$ to $z$.  It is easy to see that
\begin{equation}
\left[
\begin{array}{cc}
A + A^\dag  + C^\dag C  & B -C^\dag
\\
B^\dag  - C  & 0
\end{array}
 \right ] = 0.
\label{eq:prl-111}
\end{equation}
Thus, by the positive real lemma in \cite[Theorem 3]{ZJ11}, the system \eqref{linear_system} with the performance variable $z$ is passive. In subsection \ref{subsec:physical}, we will show that $T[s]$ also reflects the vacuum Rabi mode splitting of the Tavis-Cummings system  \eqref{system1}.  Hence, $T[s]$  bridges the passivity of the linear quantum system  \eqref{linear_system} and the  vacuum Rabi mode splitting phenomenon exhibited by the  Tavis-Cummings system  \eqref{system1}.
\emrk

In the following, we perform structural decomposition on the linear quantum system \eqref{linear_system}. We partition $\omega_1,\ldots,\omega_N$ into $M$ groups according to their degeneracies, where the $j$th degenerated frequency is denoted by  $\tilde{\omega}_j$, ($j=1,\ldots, M$).  Let the number of elements be $n_j$ for the group $j$.  In particular, if $M=N$, then $\omega_j\neq \omega_k$ for all $1\leq j<k\leq N$. For convenience, we can arrange the elements of $\bar{a}(t)$ in \eqref{linear_system} so that the matrix $A$ in \eqref{ABC} is of the form

\scriptsize
\vspace{-2ex}
\begin{align*}
&\tilde{A}\\
=&-\mathrm{i}\left[
            \begin{array}{cccccccc}
             \tilde{\omega}_1 & \cdots & 0 &\cdots & 0 & \cdots & 0 & \Gamma_{1_1} \\
              \vdots & \ddots & \vdots & \vdots &\vdots & \ddots & \vdots & \vdots \\
              0 & \cdots &\tilde{\omega}_1 &\cdots & 0 & \cdots & 0 & \Gamma_{1_{n_1}} \\
              \vdots & \ddots &  \vdots &\ddots & \vdots  & \ddots & \vdots & \vdots \\
              0 & \cdots & 0 &\cdots &\tilde{\omega}_M & \cdots & 0 & \Gamma_{M_1} \\
               \vdots & \ddots &  \vdots &\ddots & \vdots  & \ddots & \vdots &  \vdots\\
              0 & \cdots & 0 &\cdots  & 0 & \cdots &\tilde{\omega}_M & \Gamma_{M_{n_M}} \\
             \Gamma_{1_1} & \cdots & \Gamma_{1_{n_1}} &\cdots & \Gamma_{M_1} & \cdots & \Gamma_{M_{n_M}} &\omega_r-\frac{\kappa \mathrm{i}}{2} \\
            \end{array}
          \right].
\end{align*}
\normalsize
In other words, we group the elements of $\bar{a}(t)$ according to the partition of the detuned frequencies. It is easy to see that matrices $B$ and $C$ remain the same under this re-arrangement.


\begin{lemma}\label{lem:kalman}
Partition $\omega_1,\ldots,\omega_N$ into groups as described above. There is
an orthogonal matrix $\tilde{T}$ that transforms the linear quantum system \eqref{linear_system} with system matrices $(\tilde{A},B,C)$ to another one, denoted $\Sigma$, with system matrices
\begin{equation}\label{May28-4}\begin{aligned}
&\hat{A}=\tilde{T}^\top \tilde{A} \tilde{T} \\
=&-\mathrm{i}\left[
  \begin{array}{cccccc}
 \tilde{\omega}_1I_{n_1-1} & \mathbf{0} & \cdots & \cdots & \cdots & \mathbf{0} \\
    \mathbf{0} & \tilde{\omega}_1 & \cdots & \cdots & \cdots & \sqrt{\tilde{\Gamma}_1} \\
    \vdots & \vdots & \ddots & \cdots & \cdots & \vdots \\
    \vdots & \vdots & \vdots & \tilde{\omega}_MI_{n_M-1} & \cdots & \mathbf{0} \\
    \vdots & \vdots & \vdots & \vdots & \tilde{\omega}_M & \sqrt{\tilde{\Gamma}_M} \\
    \mathbf{0} & \sqrt{\tilde{\Gamma}_1} & \cdots & \mathbf{0} & \sqrt{\tilde{\Gamma}_M} & \omega_r-\frac{\kappa\mathrm{i}}{2} \\
  \end{array}
\right], \\
&\hat{B}=\tilde{T}^\top B = B ,~~\hat{C}= C\tilde{T} = C,
\end{aligned}\end{equation}
where $\tilde{\Gamma}_j\triangleq\sum_{k=1}^{n_j}\Gamma_{j_k}^2$, $(j=1,\ldots, M)$.
\end{lemma}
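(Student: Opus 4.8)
The plan is to construct $\tilde{T}$ explicitly as a block-diagonal orthogonal matrix that acts separately on each degenerate frequency group and trivially (as the identity) on the cavity mode. Concretely, for each group $j$ with $n_j$ atoms of common frequency $\tilde\omega_j$ and coupling vector $\gamma_j=[\Gamma_{j_1},\ldots,\Gamma_{j_{n_j}}]^\top$, I would pick an orthogonal $n_j\times n_j$ matrix $T_j$ whose \emph{last} column is the unit vector $\gamma_j/\|\gamma_j\|$ (assuming $\gamma_j\neq\mathbf{0}$; the degenerate sub-case $\gamma_j=\mathbf{0}$ is handled by taking $T_j=I_{n_j}$, and then that whole block is already decoupled). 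The first $n_j-1$ columns of $T_j$ form an orthonormal basis of the orthogonal complement of $\gamma_j$; any such completion works — e.g. a Householder reflection or Gram–Schmidt. Then set $\tilde T=\mathrm{diag}(T_1,\ldots,T_M,1)$, which is orthogonal since each $T_j$ is.

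The key computation is then just $\hat A=\tilde T^\top\tilde A\tilde T$. Because $\tilde A$ within a group is $-\mathrm{i}$ times a matrix of the form $\tilde\omega_j I_{n_j}$ on the diagonal block plus the coupling entries in the last row/column, and because $T_j^\top(\tilde\omega_j I_{n_j})T_j=\tilde\omega_j I_{n_j}$ (scalar matrices are invariant under orthogonal conjugation), the diagonal block is unchanged. The coupling of group $j$ to the cavity transforms as $\gamma_j^\top\mapsto\gamma_j^\top T_j=[\,0\;\cdots\;0\;\;\|\gamma_j\|\,]$ by the defining property of the last column of $T_j$; note $\|\gamma_j\|=\sqrt{\sum_{k=1}^{n_j}\Gamma_{j_k}^2}=\sqrt{\tilde\Gamma_j}$, which is exactly the entry claimed in \eqref{May28-4}. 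Since $\tilde T$ is block diagonal and the cavity block is $1$, the cavity's self-energy entry $\omega_r-\frac{\kappa\mathrm{i}}{2}$ is untouched, and the only surviving off-diagonal entries between atom-blocks and the cavity are the $\sqrt{\tilde\Gamma_j}$ in the position coupling the $n_j$-th (new) coordinate of group $j$ to the cavity. That the first $n_j-1$ new coordinates decouple entirely from the cavity is precisely the statement that those rows of $\gamma_j^\top T_j$ vanish. Finally $\hat B=\tilde T^\top B=B$ and $\hat C=C\tilde T=C$ because $B$ (and $C^\top$) is supported only on the cavity coordinate, on which $\tilde T$ acts as the identity; this also uses $C=-B^\top$.

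One should also remark that the transformation $\bar a\mapsto\tilde T^\top\bar a$ is a genuine linear-quantum-system equivalence: since $\tilde T$ is real orthogonal, the transformed oscillators still satisfy the canonical commutation relations, so $\Sigma$ is again a bona fide passive linear quantum system with Hamiltonian matrix $\hat A$'s Hermitian part and the same $L=\sqrt\kappa a$. This is the standard fact that real orthogonal (indeed, symplectic) state transformations preserve the QSDE structure.

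The main obstacle, if any, is purely bookkeeping: one must order the new basis vectors within each group so that the $n_j-1$ "dark" directions come first and the single "bright" direction $\gamma_j/\|\gamma_j\|$ comes last, matching the block pattern displayed in \eqref{May28-4}, and one must be careful about the degenerate case $\gamma_j=\mathbf 0$. There is no analytic difficulty — the whole proof reduces to the elementary observation that conjugating a scalar matrix by an orthogonal matrix does nothing, while applying an orthogonal matrix with a prescribed last column to a vector aligns that vector with the last coordinate axis.
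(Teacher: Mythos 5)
Your proposal is correct and follows essentially the same route as the paper: both construct $\tilde{T}$ as a block-diagonal orthogonal matrix whose last column in each group block is $\gamma_j/\|\gamma_j\|$, so that orthogonal conjugation leaves the scalar diagonal block $\tilde\omega_j I_{n_j}$ fixed while rotating the coupling vector onto the last coordinate with norm $\sqrt{\tilde\Gamma_j}$. The only difference is presentational — the paper writes out one explicit orthonormal completion (a Gram--Schmidt-type family of columns $T_{j_1},\ldots,T_{j_{n_j}}$), whereas you observe that any completion works and additionally flag the degenerate case $\gamma_j=\mathbf{0}$, which the paper's explicit formulas do not cover.
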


\begin{proof}
The proof is constructive. Define a matrix $\tilde{T}$ as
\begin{equation} \label{eq:tilde_T}
\tilde{T}=\left[
            \begin{array}{ccccc}
              \tilde{T}_1 & \mathbf{0}  & \cdots &  \mathbf{0} &  \mathbf{0} \\
               \mathbf{0} & \tilde{T}_2 & \cdots &  \mathbf{0} &  \mathbf{0} \\
              \vdots & \vdots & \ddots & \vdots & \vdots \\
               \mathbf{0} &  \mathbf{0} & \cdots   & \tilde{T}_M &  \mathbf{0} \\
               0&  0 & \cdots &  0 & 1 \\
            \end{array}
          \right],
\end{equation}
where
for each $j=1,\ldots,M$,  $\tilde{T}_j=[
                T_{j_1} \ T_{j_2}  \ \cdots \ T_{j_{n_j}}] \in \mathbb{R}^{n_j\times n_j}$,
in which
\begin{equation*}\label{May28-1}\begin{aligned}
T_{j_1}=&\ \sqrt{\frac{\Gamma_{j_2}^2}{\Gamma_{j_1}^2+\Gamma_{j_2}^2}}\left[1~\frac{-\Gamma_{j_1}}{\Gamma_{j_2}}~0~\cdots~0\right]^\top, \\
T_{j_2}=&\ \sqrt{\frac{\Gamma_{j_2}^2\Gamma_{j_3}^2}{(\Gamma_{j_1}^2+\Gamma_{j_2}^2)(\Gamma_{j_1}^2+\Gamma_{j_2}^2+\Gamma_{j_3}^2)}} \\
&\times\left[\frac{\Gamma_{j_1}}{\Gamma_{j_2}}~1~\frac{-(\Gamma_{j_1}^2+\Gamma_{j_2}^2)}{\Gamma_{j_2}\Gamma_{j_3}}~0~\cdots~0\right]^\top, \\
\vdots& \ \\
T_{j_{(n_j-1)}}=&\ \sqrt{\frac{\Gamma_{j_{(n_j-1)}}^2\Gamma_{j_{n_j}}^2}{(\sum_{k=1}^{n_j-1}\Gamma_{j_k}^2)(\sum_{k=1}^{n_j}\Gamma_{j_k}^2)}} \\
&\times\left[\frac{\Gamma_{j_1}}{\Gamma_{j_{(n_j-1)}}}~\frac{\Gamma_{j_2}}{\Gamma_{j_{(n_j-1)}}}~\cdots~1~\frac{-\sum_{k=1}^{n_j-1}\Gamma_{j_k}^2}{\Gamma_{j_{(n_j-1)}}g_{j_{n_j}}}\right]^\top, \\
T_{j_{n_j}}=&\ \sqrt{\frac{1}{\sum_{k=1}^{n_j}\Gamma_{j_k}^2}}\left[\Gamma_{j_1}~\Gamma_{j_2}~\cdots~\Gamma_{j_{(n_j-1)}}~\Gamma_{j_{n_j}}\right]^\top.
\end{aligned}\end{equation*}
It can be easily verified that  $\tilde{T}$ is orthogonal.  Moreover, simple algebraic manipulations yield that $\tilde{T}^\top\tilde{A}\tilde{T} = \hat{A}$.
\end{proof}

The transformed linear quantum system with system matrices $(\hat{A},B,C)$  has a nice structure. Denote by $b_j$ the system coordinate corresponding to the row  whose last entry is $-\mathrm{i}\sqrt{\bar{\Gamma}_j}$ in the matrix $\hat{A}$, $j=1,\ldots, M$. Then from the structure of $(\hat{A},B,C)$ it can be easily seen that this system has a subsystem of the form
\begin{equation}\label{system_linear}
\left\{\begin{aligned}
db_1(t)=&\ -\mathrm{i}\tilde{\omega}_1 b_1(t)dt-\mathrm{i}\sqrt{\bar{\Gamma}_1} a(t)dt, \\
\vdots& \\
db_M(t)=&\ -\mathrm{i}\tilde{\omega}_M b_M(t)dt-\mathrm{i}\sqrt{\bar{\Gamma}_M} a(t)dt, \\
da(t)=&-(\mathrm{i}\omega_r+\frac{\kappa}{2})a(t)dt
\\
&-\mathrm{i}\sum_{j=1}^{M} \sqrt{\bar{\Gamma}_j}b_j(t)dt-\sqrt{\kappa}d B_{\mathrm{in}}(t),\\
dB_{\mathrm{out}}(t)=&\sqrt{\kappa}a(t)dt+dB_{\mathrm{in}}(t).
\end{aligned}\right.
\end{equation}
The other $M-1$ subsystems are all isolated systems, which are called decoherence-free subsystems (DFSs) in the linear quantum control literature \cite{NY14tac,GZ15,NY17,ZGPG18}. Of course, if $n_j=1$ for some $j=1,\ldots,M$, there is no such subsystem as can be seen clearly from the matrix $\hat{A}$ in \eqref{May28-4}. According to quantum linear systems theory,  the subsystem  \eqref{system_linear} is a controllable and observable subsystem.

The following result is an immediate consequence of Lemma \ref{lem:kalman}.

\begin{corollary}\label{cor:kalman}
If $\omega_1=\cdots=\omega_N\equiv \omega_s$, then $M=1$ and $n_1=N$ in \eqref{May28-4}. Accordingly, the  transformation matrix $\tilde{T}$ in \eqref{eq:tilde_T} reduces to
 \begin{equation}\label{feb22-1}
\left[
    \begin{array}{c|c}
      \tilde{T}_1  & \mathbf{0} \\
      0& 1 \\
    \end{array}
  \right]\equiv T,
\end{equation}
which transforms the quantum linear system \eqref{linear_system} to a new one with system matrices
\begin{equation}\label{eq:jun4_ABC_4}\begin{aligned}
&\hat{A}=T^\top A T =\left[
                \begin{array}{c|c}
                  \hat{A}_{\bar{c}\bar{o}} & \mathbf{0} \\  \hline
                  \mathbf{0} & \hat{A}_{co} \\
                \end{array}
              \right], \\
             & \hat{B}=T^\top B=\left[
                \begin{array}{c}
                \hat{B}_{\bar{c}\bar{o}} \\  \hline
               \hat{B}_{co}  \\
                \end{array}
              \right],\hat{C}=CT=[ \hat{C}_{\bar{c}\bar{o}} \  | \  \hat{C}_{co} ],
\end{aligned}\end{equation}
where
\begin{equation}\label{feb22-2}\begin{aligned}
&\hat{A}_{\bar{c}\bar{o}}=-\mathrm{i}\omega_s I_{N-1}, \ \
\hat{A}_{co}=-\left[
         \begin{array}{cc}
           \mathrm{i}\omega_s & \mathrm{i}\sqrt{N}\bar{\Gamma} \\
           \mathrm{i}\sqrt{N}\bar{\Gamma} & \mathrm{i}\omega_r+\frac{\kappa}{2}
         \end{array}
       \right]
       \\
&\hat{B}_{\bar{c}\bar{o}}  =  \mathbf{0},       \
 \hat{B}_{co} = -\left[
                \begin{array}{c}
                  0 \\
                \sqrt{\kappa}
                \end{array}
              \right],  \hat{C}_{\bar{c}\bar{o}} =  \mathbf{0},  \
              \hat{C}_{co} =
              \left[
                  \begin{array}{cc}
               0 &  \sqrt{\kappa}
               \end{array}\right].
\end{aligned}\end{equation}
\end{corollary}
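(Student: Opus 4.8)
The plan is to derive Corollary~\ref{cor:kalman} as a direct specialization of Lemma~\ref{lem:kalman} to the fully degenerate case, followed by reading off the block structure of the transformed triple and a short Kalman-structure check. First I would set $M=1$ and $n_1=N$, so that the partition of $\omega_1,\ldots,\omega_N$ consists of a single group with common value $\tilde{\omega}_1=\omega_s$. Then no reordering of the entries of $\bar{a}(t)$ is needed, so $\tilde{A}=A$, and the matrix $\tilde{T}$ in \eqref{eq:tilde_T} collapses to the single $N\times N$ block $\tilde{T}_1$ together with the trailing scalar $1$ for the cavity coordinate; this is precisely the matrix $T$ in \eqref{feb22-1}, whose orthogonality is already established in the proof of Lemma~\ref{lem:kalman}.

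Next I would substitute $M=1$, $n_1=N$ into the formula \eqref{May28-4} for $\hat{A}=\tilde{T}^\top\tilde{A}\tilde{T}$. The leading diagonal block $\tilde{\omega}_1 I_{n_1-1}$ becomes $\omega_s I_{N-1}$, which gives $\hat{A}_{\bar{c}\bar{o}}=-\mathrm{i}\omega_s I_{N-1}$; the remaining $2\times 2$ block couples the bright atomic coordinate (of frequency $\tilde{\omega}_1=\omega_s$) to the cavity coordinate (with diagonal entry $\omega_r-\frac{\kappa\mathrm{i}}{2}$) through the off-diagonal term $\sqrt{\tilde{\Gamma}_1}$. Since $\tilde{\Gamma}_1=\sum_{k=1}^N\Gamma_k^2=N\bar{\Gamma}^2$ by the definition $\bar{\Gamma}=\sqrt{\frac{1}{N}\sum_k\Gamma_k^2}$, this coupling equals $\sqrt{N}\,\bar{\Gamma}$, and carrying the $-\mathrm{i}$ factor through reproduces $\hat{A}_{co}$ as stated in \eqref{feb22-2}. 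For $\hat{B}$ and $\hat{C}$ I would note that $B$ and $C$ in \eqref{ABC} are supported only on the cavity coordinate and that the last row (column) of $\tilde{T}$ is the standard basis vector, so $\hat{B}=\tilde{T}^\top B=B$ and $\hat{C}=C\tilde{T}=C$; partitioning these $(N+1)$-vectors as $(N-1)+2$ then yields $\hat{B}_{\bar{c}\bar{o}}=\mathbf{0}$, $\hat{B}_{co}=-[\,0\ \ \sqrt{\kappa}\,]^\top$, $\hat{C}_{\bar{c}\bar{o}}=\mathbf{0}$, $\hat{C}_{co}=[\,0\ \ \sqrt{\kappa}\,]$.

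Finally I would justify the subscripts: since $\hat{A}$ is block diagonal while $\hat{B}$ has no entry in the first block and $\hat{C}$ reads nothing from it, the $(N-1)$-dimensional subspace is both unreachable from the input and unobservable at the output, hence the $\bar{c}\bar{o}$ label; for the $2\times 2$ block, a one-line rank computation of the controllability matrix $[\,\hat{B}_{co}\ \ \hat{A}_{co}\hat{B}_{co}\,]$, whose determinant is $\mathrm{i}\kappa\sqrt{N}\,\bar{\Gamma}$, and of its observability dual confirms controllability and observability whenever $\kappa>0$ and $\bar{\Gamma}\neq 0$. There is no genuine obstacle here --- the statement is essentially a substitution into Lemma~\ref{lem:kalman} --- but the one point I would check carefully is the bookkeeping: one must confirm that the last column $T_{1_{n_1}}$ of $\tilde{T}_1$ in the proof of Lemma~\ref{lem:kalman} is the normalized bright combination $\frac{1}{\sqrt{N}\,\bar{\Gamma}}[\Gamma_1,\ldots,\Gamma_N]^\top$ and sits in the last atomic slot, so that the decomposition in \eqref{eq:jun4_ABC_4} is genuinely block diagonal rather than merely block triangular.
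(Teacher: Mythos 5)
Your proposal is correct and follows exactly the route the paper intends: Corollary~\ref{cor:kalman} is stated there as an immediate consequence of Lemma~\ref{lem:kalman}, obtained by setting $M=1$, $n_1=N$ and using $\tilde{\Gamma}_1=\sum_{k=1}^N\Gamma_k^2=N\bar{\Gamma}^2$, just as you do. Your added verification that the $2\times 2$ block is controllable and observable (and the $(N-1)$-dimensional block is neither) is a correct and worthwhile bookkeeping check that the paper leaves implicit.
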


\bmrk
In the linear quantum systems theory \cite{NY14tac,GZ15,NY17,ZGPG18} DFSs   are  defined in the Heisenberg picture. On the other hand, decoherence-free subspaces widely used  in the quantum information community  are characterized by  density matrices in the Schr\"odinger picture; see references 3-8 in \cite{TV08}.  It can be seen that  the eigenvectors $\{T_1,\ldots, T_{N-1}\}$ of the DFS with system matrices  $(\hat{A}_{\bar{c}\bar{o}}, \hat{B}_{\bar{c}\bar{o}}, \hat{C}_{\bar{c}\bar{o}})$ constitute  a basis of  the  decoherence-free subspace. Some examples will be given in subsection \ref{subsec:physical}.
\emrk

\begin{remark}
The coupling operator $L=\sqrt{\kappa}a$ is called the noise operator in \cite{TV08}, through which the system information leaks irreversibly into its surrounding environment. It can be clearly seen from  \eqref{eq:jun4_ABC_4}-\eqref{feb22-2} that $L$ has no effect on the DFS with system matrices $(\hat{A}_{\bar{c}\bar{o}}, \hat{B}_{\bar{c}\bar{o}}, \hat{C}_{\bar{c}\bar{o}})$, thus the DFS is robust with respect to the cavity decay rate $\kappa$. This is the so-called $\gamma$-robustness in \cite[Definition 3]{TV08}. Moreover, from  \eqref{feb22-2} it can be seen that the DFS  is robust with respect to the variations of the atom-cavity coupling strengths $\Gamma_j$ as well. Finally, the DFS is not attractive from \cite[Proposition 3]{TV08}, which is for noiseless subsystems and reduces to DFSs when  ${\cal H}_F \simeq \mathbb{C}$; see \cite[Definition 8 and subsection II-A]{TV08} for details.
\end{remark}

\subsection{Physical interpretation}\label{subsec:physical}

The discussions of the Tavis-Cummings model in the previous subsection  in terms of linear quantum systems theory  appear purely mathematical;  however, these results can indeed reveal several typical features of the Tavis-Cummings model.

Firstly, from \eqref{eq:transfer_3}, it can be seen that the collection of the $N$ atoms act as a giant atom of detuned frequency $\omega_s$ and coupling strength $\sqrt{N}\bar{\Gamma}$, which reflects the $\sqrt{N}$- scaling of the collective coupling strength; in other words, this giant atom decays $N$ times as fast as a single atom. This is the physical basis of  superradiance.

Secondly, the two peaks of the transfer function $|T[\mathrm{i}\omega]|^2$ in \eqref{eq:T_s2} at $\omega=\pm \sqrt{N}\bar{\Gamma}$ echo the vacuum Rabi mode splitting in the Tavis-Cummings model, see, e.g., \cite[Fig. 2(b)]{MCG07}, \cite[Fig. 3]{Fink09}, and \cite[Fig. 2(b)]{WW+PRL20}.

Thirdly, the controllable and observable subsystem echos the bright states and the uncontrollable and unobservable subsystems echo the dark states of  the Tavis-Cummings model. For simplicity, we look at the simplest case of $\omega_1=\cdots=\omega_N\equiv\omega_s$. In this case, according to Corollary \ref{cor:kalman}, there is an orthogonal matrix   $T$ which yields a system with system matrices given in \eqref{eq:jun4_ABC_4}.
%
%
%
Let $N=3$ and assume the coupling constants $\Gamma_1=-\Gamma_2=\Gamma_3$. (In \cite{Fink09}, $\Gamma_1,\Gamma_2,\Gamma_3$ are respectively $g_\mathrm{A},g_\mathrm{B},g_\mathrm{C}$. As shown in \cite[Table 1]{Fink09}, the actual values of $\Gamma_1,-\Gamma_2,\Gamma_3$ are not exactly identical, but the discrepancy has negligible effect as can be seen from consistency between the red region (for real data) and the white dashed curves (for theoretical calculation) in \cite[Fig. 3]{Fink09}.) The orthogonal transformation matrix $T$ can be calculated as $T_1=\frac{1}{\sqrt{2}}\left[
                         \begin{array}{cccc}
                           1 & 1 & 0 & 0 \\
                         \end{array}
                       \right]^\top$, $T_2=\frac{1}{\sqrt{6}}\left[
                         \begin{array}{cccc}
                           -1 & 1 & 2 & 0 \\
                         \end{array}
                       \right]^\top$, $T_3=\frac{1}{\sqrt{3}}\left[
                         \begin{array}{cccc}
                           1 & -1 & 1 & 0 \\
                         \end{array}
                       \right]^\top$, and   $T_4=\left[
                         \begin{array}{cccc}
                           0 & 0 & 0 & 1 \\
                         \end{array}
                       \right]^\top$.
Identify $g$ with $0$ and $e$ with $1$ respectively. It can be verified that the dark states $|3,1d_1\rangle = \frac{1}{\sqrt{2}} (\ket{e,g,g,0}-\ket{g,g,e,0})$ and $|3,1d_2\rangle =  \frac{1}{\sqrt{2}} (\ket{g,e,g,0}+\ket{g,g,e,0})$ in \cite{Fink09} can be expressed as $|3,1d_1\rangle=\frac{1}{2}T_1-\frac{\sqrt{3}}{2}T_2$, and  $|3,1d_2\rangle=\frac{1}{2}T_1+\frac{\sqrt{3}}{2}T_2$ respectively,
while the bright states $|3,1\pm\rangle = \frac{1}{\sqrt{2}}\ket{g,g,g,1}\pm \frac{1}{\sqrt{6}}\ket{e,g,g,0}$ in \cite{Fink09} can be written as $|3,1\pm\rangle=\frac{1}{\sqrt{2}}(T_4\pm T_3)$.
In other words, the dark states $|3,1d_1\rangle$ and $|3,1d_2\rangle$ live in the decoherence-free subspace spanned by $T_1$ and $T_2$, while the bright states $|3,1\pm\rangle$ live in the controllable and observable subspace spanned by $T_3$ and $T_4$.

Take the case $N=2$ for another example. Similar correspondence can be found between the eigenstates of the Tavis-Cummings model \eqref{system1} and the vectors in \eqref{feb22-1}. Regard the states $|g,e,0\rangle$, $|e,g,0\rangle$, $|g,g,1\rangle$ in \cite{Van18} as vectors $[0~1~0]^\top$, $[1~0~0]^\top$, and $[0~0~1]^\top$, respectively. In the resonant ($\omega_r=\omega_s$) case, the dark state $|0\rangle_{r3} = \frac{1}{\sqrt{N}\bar{\Gamma}}(\Gamma_1\ket{g,e,0}-\Gamma_2\ket{e,g,0})$ in \cite[Appendix C-3]{Van18} is $-T_1$, and the two bright states $|\pm\rangle_{r3} =\frac{1}{N\bar{\Gamma}} (\Gamma_2\ket{g,e,0}+\Gamma_1\ket{e,g,0}\pm \sqrt{N}\bar{\Gamma}\ket{g,g,1})$ in \cite[Appendix C-3]{Van18} are  $\frac{1}{\sqrt{2}}(T_2\pm T_3)$. Also, $|0\rangle_{r3}$ and $|\pm\rangle_{r3}$ are respectively $-\ket{2,1d}$ and $-\ket{2,1\pm}$ in \cite{Fink09} when $\Gamma_1=-\Gamma_2=\Gamma_3$. Moreover, in the dispersive regime ($\Delta_r=|\omega_r-\omega_s|\gg \Gamma_j$, $j=1,\ldots,N$) considered in \cite[Appendix C]{Van18}, the three states $|+^\prime\rangle_{r3}, |-^\prime\rangle_{r3}, |1^\prime\rangle_{r3}$ given in \cite[(C18)]{Van18} can be expressed by
 $|+^\prime\rangle_{r3}=-T_1$, $|-^\prime\rangle_{r3}=\frac{1}{\sqrt{N\bar{\Gamma}^2+\Delta_r^2}}(\Delta_r T_2-\sqrt{N}\bar{\Gamma}T_3)\approx T_2$, and $|1^\prime\rangle_{r3}=\frac{1}{\sqrt{N\bar{\Gamma}^2+\Delta_r^2}}(\sqrt{N}\bar{\Gamma}T_2 + \Delta_r T_3) \approx T_3$.
Thus, in both cases, the dark  states live in the space spanned by $T_1$ while the bright states live in the space  spanned by $T_2$ and $T_3$.

Fourthly, the dynamics of two remote spin ensembles coupled by a cavity bus are experimentally studied in \cite{ANP+17}. The Hamiltonian of the system is given in \cite[(2)]{ANP+17}, which is of the form of $H_{\mathrm{TC}}$ for the Tavis-Cummings model.
When $\varphi\approx 48.1^\mathrm{o}$ in  \cite[Fig. 2(b)]{ANP+17}, both spin ensembles resonate with the cavity mode.  Thus, this particularly interesting case  can be analyzed by Corollary \ref{cor:kalman}. Indeed,  the eigenstates $\ket{\pm}$ in \cite[(3)]{ANP+17} and the dark mode $\ket{D}$ in \cite[(4)]{ANP+17} can be obtained by means of the orthonormal matrix $T$ in  Corollary \ref{cor:kalman}.

Fifthly, in subsections \ref{sec:output} and \ref{sec:outputsimu} we study how the Tavis-Cummings model \eqref{system1} responses to a continuous-mode single-photon input state, where the linear model developed in subsection \ref{subsec:linear} plays an essential role.

Finally, the single-excitation superradiant and subradiant states of the Tavis-Cummings model can be analyzed by means of the quantum linear systems theory presented in subsection \ref{subsec:linear}; see Remark \ref{rem: sup_sub} in subsection \ref{sec:super}.

\section{The single-excitation case}\label{sec:single_excitation}
In this section, we investigate the dynamics of the Tavis-Cummings model when there is only one excitation. 

\subsection{Response to single-photon inputs}\label{sec:output}

In this subsection, we derive an analytic expression of the steady-state output field state when the  Tavis-Cummings model is initialized in the state $\ket{\zeta}$  given in \eqref{feb3:initial}, and driven by a single-photon state.

We start with the following lemma which discusses the controllability (\cite[Sec. III-B]{NY14tac}, \cite[Definition 1]{GZ15}) of the passive linear  quantum system \eqref{linear_system}.

\begin{lemma}\label{stability}
The passive linear  quantum system \eqref{linear_system} is controllable if and only if $\omega_j\neq \omega_k$ for all $1\leq j<k\leq N$.
\end{lemma}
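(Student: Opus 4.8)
\emph{Plan of proof.}\quad Since $S=I$ and \eqref{linear_system} is a linear system with matrices $(A,B,C)$ from \eqref{ABC}, controllability in the sense of \cite[Sec.~III-B]{NY14tac}, \cite[Definition~1]{GZ15} is Kalman controllability of the pair $(A,B)$, equivalently the Popov--Belevitch--Hautus (PBH) rank condition $\mathrm{rank}\,[\,sI-A\ \mid\ B\,]=N+1$ for every $s\in\mbb{C}$. I would verify this condition directly, exploiting two structural facts: $B=-\sqrt{\kappa}\,e_{N+1}$ is a single input vector, and $A=-\mathrm{i}M$ where $M$ is the \emph{complex symmetric} matrix in \eqref{ABC}, so $A^\top=A$ even though $A$ is not Hermitian (because of the entry $\omega_r-\mathrm{i}\kappa/2$). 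I will use the standing assumption $\Gamma_j\neq0$ for all $j$; without it the statement fails, since a decoupled atom already yields an uncontrollable mode even with distinct $\omega_j$.

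The first step is to translate PBH into a statement about eigenvectors of $M$. The rank drops at $s$ iff there is a row vector $q\neq0$ with $q(sI-A)=0$ and $qB=0$. Because $A^\top=A$, the first equation reads $Mq^\top=\mathrm{i}s\,q^\top$, i.e.\ $q^\top$ is a right eigenvector of $M$, while $qB=0$ says its last coordinate vanishes. Hence \eqref{linear_system} is \emph{uncontrollable} iff $M$ has an eigenvector $w=(w_1,\dots,w_N,0)^\top$ supported on the first $N$ coordinates. Writing out $Mw=\lambda w$ gives $(\omega_j-\lambda)w_j=0$ for $j=1,\dots,N$ together with $\sum_{j=1}^N\Gamma_jw_j=0$; so $\lambda$ must equal some $\omega_{j_0}$, $w$ is supported on the degeneracy class $S_\lambda=\{j:\omega_j=\lambda\}$, and one needs $\sum_{j\in S_\lambda}\Gamma_jw_j=0$.

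The two implications are then short. If the $\omega_j$ are pairwise distinct, $|S_\lambda|=1$, say $S_\lambda=\{j_0\}$, so $w=w_{j_0}e_{j_0}$ and the constraint forces $\Gamma_{j_0}w_{j_0}=0$, impossible since $\Gamma_{j_0}\neq0$ and $w_{j_0}\neq0$; hence no such $w$, and the system is controllable. Conversely, if $\omega_j=\omega_k$ for some $j<k$, then $w=\Gamma_ke_j-\Gamma_je_k$ (last coordinate $0$) is an eigenvector of $M$ with eigenvalue $\omega_j$ and satisfies $\sum_\ell\Gamma_\ell w_\ell=\Gamma_j\Gamma_k-\Gamma_k\Gamma_j=0$, so PBH fails at $s=-\mathrm{i}\omega_j$ and the system is uncontrollable. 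This $w$ is exactly a single-excitation dark (subradiant) mode, consistent with Lemma~\ref{lem:kalman} and Corollary~\ref{cor:kalman}.

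The only genuinely delicate point is the bookkeeping around the non-Hermitian structure: one must use the bilinear (non-conjugated) transpose when passing between the PBH left-eigenvector and eigenvectors of $M$, and it is the symmetry $A^\top=A$, not Hermiticity, that lets left and right eigenvectors coincide. An alternative and equally short route goes through the transfer function: since $C=-B^\top$ and $A^\top=A$, controllability is equivalent to observability and hence to minimality of \eqref{transfer}, so it suffices that its numerator and denominator have no common root; subtracting them gives $\kappa\prod_{j=1}^N(s+\mathrm{i}\omega_j)$, so any common root is $s=-\mathrm{i}\omega_{j_0}$, at which the numerator equals $\Gamma_{j_0}^2\prod_{j\neq j_0}\mathrm{i}(\omega_j-\omega_{j_0})$, vanishing precisely when $\omega_{j_0}$ is a repeated frequency. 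I would present the PBH/eigenvector argument as the main proof and mention the transfer-function version as a remark.
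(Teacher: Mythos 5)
Your proof is correct and essentially the same as the paper's: both reduce controllability to the nonexistence of an eigenvector of $A$ with vanishing last component (yielding $x_{N+1}=0$, $-\mathrm{i}\omega_k x_k=\lambda x_k$, $\sum_{k}\Gamma_k x_k=0$) and exhibit the identical witness $\Gamma_k e_j-\Gamma_j e_k$ when two frequencies coincide. Your explicit flagging of the standing assumption $\Gamma_j\neq0$ is a point the paper leaves implicit, but otherwise the argument coincides.
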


\begin{proof}
Let a scalar $\lambda\in\mathbb{C}$ and a vector $x=[x_1,\ldots,x_{N+1}]^\top\in\mathbb{C}^{N+1}$ satisfy
\begin{equation}\label{eq:feb16_1a}
Ax=\lambda x, \ \ x^\dagger B=0.
\end{equation}
We have
\begin{equation}\label{eq:feb16_2a}
x_{N+1}=0, \ \ \ -\mathrm{i} \omega_k x_k = \lambda x_k, \ \ \ k=1,\ldots, N,
\end{equation}
and
\begin{equation}\label{eq:feb16_2c}
\sum_{k=1}^N \Gamma_k x_k =0.
\end{equation}

\emph{Necessity}. We prove it by contradiction. Without loss of generality, assume $\omega_1=\omega_2$. We choose $\lambda=-\mathrm{i}\omega_1$,  $x_1=\Gamma_2$, $x_2=-\Gamma_1$, and $x_j=0$ for all $j=3,\ldots, N+1$. Clearly, the scalar $\lambda$ and the {\it nonzero} vector $x$ satisfy \eqref{eq:feb16_1a}. This means that the system is not controllable. A contradiction is reached.

\emph{Sufficiency}. If $x_j\neq 0$ for some $j=1,\dots, N$, then by \eqref{eq:feb16_2a} $\lambda=-\mathrm{i}\omega_j$. As $\omega_j\neq\omega_k$ for all $1\leq j<k\leq N$, again by \eqref{eq:feb16_2a} $x_k=0$ for all $j\neq k$. Thus, \eqref{eq:feb16_2c} reduces to $\Gamma_jx_j=0$, which yields $x_j=0$. This shows that any vector $x$ satisfying \eqref{eq:feb16_1a} must be a zero vector. Hence, the system is controllable.
\end{proof}

For system \eqref{system_linear}, $\tilde{\omega}_j \neq \tilde{\omega}_k$ for all $1\leq j<k\leq M$. By Lemma \ref{stability}, system \eqref{system_linear} is Hurwitz stable.

With the aid of Lemma \ref{stability}, the main result of this section can be derived.

\begin{theorem}\label{theoutput}
Assume that the Tavis-Cummings model \eqref{system1} is initialized in the state $\ket{\zeta}=|g_1g_2\cdots g_N\rangle\otimes|0\rangle$ and driven by a single-photon input state with pulse shape $\xi$. The steady-state ($t\to \infty$ and $t_0\to -\infty$) output field state is a single-photon state with the frequency-domain pulse shape
\begin{equation}\label{outputspe}
\eta[\mathrm{i}\omega]=G[\mathrm{i}\omega]\xi[\mathrm{i}\omega],
\end{equation}
where the transfer function $G[s]$ is given by \eqref{transfer}.
\end{theorem}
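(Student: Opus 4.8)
The plan is to reduce the computation of the output state to a linear systems calculation, exploiting the key observation that when the atoms all start in the ground state, the cavity is empty, and the driving field carries only a single photon, the full bilinear dynamics \eqref{system1} restricted to the relevant subspace coincide with the linear dynamics \eqref{linear_system}. Concretely, I would first recall that \eqref{eq:z_g} gives $\sigma_{z,j}\ket{\zeta\Phi_0}=-\ket{\zeta\Phi_0}$, and argue that in the single-excitation sector this relation propagates: the state $\ket{\Psi(t)}$ stays in the span of $\ket{\zeta\Phi_0}$ together with single-excitation vectors, on which each $\sigma_{z,j}(t)$ acts as $-1$ to leading order, so that the Heisenberg equations \eqref{system1} collapse to the linear QSDE \eqref{system2} for $X(t)\ket{\zeta\Phi_0}$. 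This is exactly the content of \eqref{system2}, so I would simply invoke it.

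Next I would set up the single-photon input $\ket{1_\xi}$ as in \eqref{1 photon jan21} and compute the output field state $B_{\mathrm{out}}$ acting on the joint initial state. The standard route is to use the input-output relation $dB_{\mathrm{out}}(t) = C X(t)dt + dB_{\mathrm{in}}(t)$ from \eqref{system2}, integrate the linear equation for $X(t)\ket{\zeta\Phi_0}$ to get $X(t)\ket{\zeta\Phi_0} = \int_{t_0}^t e^{A(t-r)} B\, dB_{\mathrm{in}}(r)\ket{\zeta\Phi_0}$ (the homogeneous term dies because the atoms and cavity start empty), and then feed this into the output relation. Acting with $dB_{\mathrm{in}}(r)$ on the single-photon state $\ket{1_\xi}$ produces $\xi(r)dr\ket{\Phi_0}$ plus a vacuum-noise piece; since the $(S,L,H)$ model preserves photon number in this passive setting, the output remains a one-photon state, and collecting the kernel gives, in the time domain, $\eta(t) = \xi(t) + \int C e^{A(t-r)} B\,\xi(r)\,dr$. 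Taking Fourier/Laplace transforms and using $\int_0^\infty C e^{A\tau} B e^{-\mathrm{i}\omega\tau}d\tau = C(\mathrm{i}\omega I - A)^{-1}B$ together with the definition \eqref{tran} of $G[s]$ yields $\eta[\mathrm{i}\omega] = (1 + C(\mathrm{i}\omega I - A)^{-1}B)\,\xi[\mathrm{i}\omega] = G[\mathrm{i}\omega]\xi[\mathrm{i}\omega]$.

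The steady-state limits $t\to\infty$, $t_0\to-\infty$ require that $A$ be Hurwitz so that transients vanish and the convolution kernel is well-defined on $\mathbb{R}$; this is where Lemma \ref{stability} (and the passivity/positive-real structure in Remark \ref{rem:passivity}) enters — strictly speaking $A$ has eigenvalues $-\mathrm{i}\omega_j$ on the imaginary axis when there are degenerate atomic frequencies, corresponding to the dark/uncontrollable modes, but those modes are decoupled from both $B$ and $C$ by Lemma \ref{lem:kalman}, so the transfer function $G[s]$ only sees the controllable-observable block \eqref{system_linear}, which is Hurwitz. I would therefore phrase the argument on the reduced system $(\hat A_{co}, \hat B_{co}, \hat C_{co})$ (or equivalently just note that the unstable modes are unobservable and contribute nothing to $\eta$), so the $t\to\infty$ limit is legitimate.

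The main obstacle, and the step I would be most careful about, is the rigorous justification that the bilinear system genuinely linearizes on the single-excitation subspace — i.e., that replacing $\sigma_{z,j}(t)$ by $-I$ is exact (not merely a weak-coupling approximation) when acting on $\ket{\zeta\Phi_0}$ with a one-photon drive. This hinges on excitation-number conservation for the Tavis-Cummings $(I, \sqrt{\kappa}a, H_{\mathrm{TC}})$ model: the total excitation operator $\mathcal{N} = a^\ast a + \sum_j \sigma_{+,j}\sigma_{-,j} + \Lambda$ (with $\Lambda$ the field photon number) is conserved, so starting from one quantum the state never leaves the one-excitation sector, on which $\sum_j(\sigma_{z,j}+I)/2$ acts as the identity on the atoms' excitation count and the nonlinearity is invisible. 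I would make this conservation law explicit as the backbone of the proof, then let the linear-systems machinery and the Fourier transform do the rest, citing earlier single-photon input-output results in linear quantum systems theory \cite{GJNC12,ZJ13} for the convolution formula $\eta = G \ast \xi$.
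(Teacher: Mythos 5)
Your proposal is correct and follows essentially the same route as the paper: exact linearization of \eqref{system1} on the zero/one-excitation sector via $\sigma_{z,j}\ket{\zeta\Phi_0}=-\ket{\zeta\Phi_0}$, solution of the resulting linear system \eqref{system2}, identification of the convolution kernel with the transfer function \eqref{tran}, and the Kalman structure (Lemma \ref{lem:kalman}, Corollary \ref{cor:kalman}, Proposition \ref{prop:3nov}) to dispose of the imaginary-axis modes when atomic frequencies are degenerate. The one mechanical difference is that you act with the operators on the ket $\ket{\zeta}\otimes\ket{1_\xi}$, so the homogeneous term vanishes identically (since $X\ket{\zeta}=0$) and $dB_{\mathrm{in}}(r)$ pulls out $\xi(r)$ directly, whereas the paper works with $\bra{\zeta\Phi_0}b_{\mathrm{out}}(t)$ on the left — where the transient does \emph{not} vanish and must be killed by the stability argument — and then invokes the stable-inversion technique of \cite[Lemma 1]{ZJ13} to convert the resulting annihilation-operator identity into one for creation operators before integrating against $\xi$. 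Your route is arguably more economical, but note that what you actually compute is $b_{\mathrm{out}}(t)\ket{\zeta}\otimes\ket{1_\xi}=\eta(t)\ket{\zeta\Phi_0}$; the final step — that the steady-state Schr\"odinger-picture joint state factorizes as $\ket{\zeta}\otimes\ket{1_\eta}$ with $\ket{1_\eta}$ expressed in terms of the input creation operators, so that tracing out the system leaves the pure single-photon state $\ket{1_\eta}$ — is precisely the content of \cite[Eq. (85)]{ZJ13}, which the paper cites explicitly and you leave implicit behind the phrase ``the output remains a one-photon state.'' Your excitation-number-conservation argument for the exactness of the replacement $\sigma_{z,j}(t)a(t)\to -a(t)$ on the one-excitation sector is sound and is in fact spelled out more carefully than in the paper, which simply reuses \eqref{system2} (derived for a vacuum input) without comment.
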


\begin{proof}
By system \eqref{system1} and \eqref{eq:z_g}, we have
\beqnm
\bra{\zeta \Phi_0} b_{\rm out}(t) &=& Ce^{A(t-t_0)} \bra{\zeta \Phi_0} X(t_0) \\
&&+ \int_{t_0}^t Ce^{A(t-\tau)}  \bra{\zeta \Phi_0} b_{\rm in}(\tau) d\tau + \bra{\zeta \Phi_0} b(t).
\eeqnm
If $A$ is Hurwitz stable, then $Ce^{A(t-t_0)} \bra{\zeta \Phi_0} X(t_0)  \to 0$ as $t_0\to -\infty$.  If $A$ is not Hurwitz stable, then by Corollary \ref{cor:kalman}, the $\bar{c}\bar{o}$ subsystem does not affect the input-output behavior, while the $co$ subsystem is Hurwitz stable; see Proposition \ref{prop:3nov} in the Appendix. Hence we also have $Ce^{A(t-t_0)} \bra{\zeta \Phi_0} X(t_0)  \to 0$ as $t_0\to -\infty$. As a result, sending $t_0\to -\infty$, we get
\beq
\bra{\zeta \Phi_0} b_{\rm out}(t)  = \int_{-\infty}^\infty g_G(t-r) \bra{\zeta \Phi_0} b_{\rm in}(r) dr,
\eeq
where $g_G(t)$ is the impulse response function associated to the transfer function $G[s]$ in \eqref{tran}.
As $b_{\rm out}(t) = U^\ast b(t,t_0)_{\rm in}(t) U(t,t_0)$, we have
\beq
\bra{\zeta \Phi_0} b_{\rm in}(t)  = \int_{-\infty}^\infty g_G(t-r) \bra{\zeta \Phi_0} b^-(r,-\infty) dr,
\eeq
where $b^-(t,t_0) \triangleq U(t,t_0) b_{\rm in}(t) U^\ast(t,t_0)$. Then following the stable inverse technique in \cite[Lemma 1]{ZJ13}, we can get
\beqn
 b^{-\ast}(r,-\infty) \ket{\zeta \Phi_0}
 &=&\int_{-\infty}^\infty g_{G^{-1}}^\ast(r-t) b_{\rm in}^\ast(t) dt \ket{\zeta \Phi_0} \nonumber \\
 &=&  \int_{-\infty}^\infty g_G(t-r) b_{\rm in}^\ast(t) dt \ket{\zeta \Phi_0},
\eeqn
and thus
\beq
  \int_{-\infty}^\infty \xi(r) b^{-\ast}(r,-\infty) \ket{\zeta \Phi_0}  dr=  \int_{-\infty}^\infty \eta(t) b_{\rm in}^\ast(t)  \ket{\zeta \Phi_0} dt ,
\eeq
where $\eta(t)$ is the time-domain counterpart of $\eta[\mathrm{i}\omega]$ in \eqref{outputspe}. Consequently, the steady-state joint system-field state in \cite[Eq. (85)]{ZJ13} is
\beq
\rho_\infty = \ket{\zeta}\bra{\zeta}\otimes \ket{1_\eta}\bra{1_\eta}.
\eeq The steady-state output field state  $\rho_{\rm out}$ is then obtained by tracing over the initial system state, i.e.,  $\rho_{\rm out}=\mathrm{Tr}_{\rm sys}[\rho_\infty ] =\ket{1_\eta}\bra{1_\eta} $, which is a pure state $\ket{1_\eta}$.
\end{proof}

\subsection{Simulation results for $N\leq4$}\label{sec:outputsimu}
In this subsection, we illustrate Theorem \ref{theoutput} by a special case of $N\leq4$, i.e., at most four two-level atoms are coupled to the cavity. The single-photon input state is supposed to have a rising exponential  pulse shape
\begin{equation}\label{eq:xi}
\xi(t)=\left\{\begin{array}{cc}
                \sqrt{\gamma}e^{\frac{\gamma}{2}t}, & t\leq0, \\
                0, & t>0,
              \end{array}
\right.
\end{equation}
where $\gamma$ denotes the full width at half maximum (FWHM) of the Lorentzian spectrum. The input and output photon probability distributions are shown in Fig. \ref{distri1}. It is worthwhile to notice that the carrier frequency of the single-photon input field is not shown in  \eqref{eq:xi}, the reason is that all the frequencies in the system Hamiltonian $H_{\mathrm{TC}}$ in  \eqref{H_TC} are detuned from this carrier frequency.

\begin{figure}[htp!]
\centering
\includegraphics[scale=0.6]{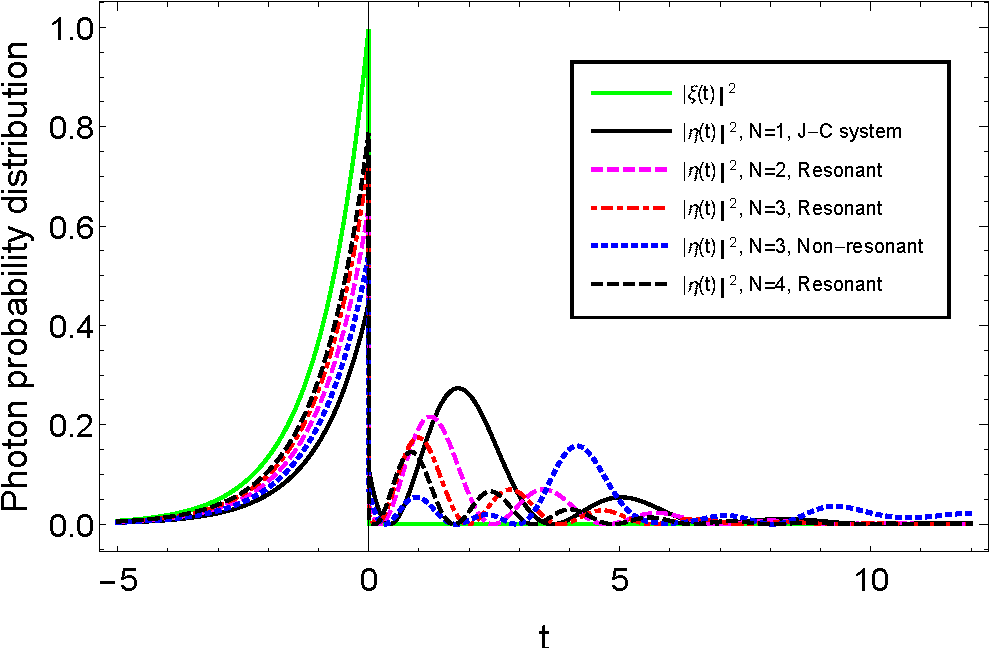}
\caption{The input and output photon probability distributions. The input photon probability distribution is plotted as the green curve; the red dot dashed curve plots the output photon probability distribution in the case that all the three two-level atoms are resonant with each other ($\omega_1=\omega_2=\omega_3=0$); the non-resonant case is plotted as the blue dotted curve ($\omega_1=1$, $\omega_2=-1$, $\omega_3=0$); the output photon probability distributions for the resonant cases of $N=1$ (Jaynes-Cummings system), $N=2$ and $N=4$ are shown as the black solid, magenta dashed, and black dashed curves respectively. $(\gamma=\kappa=1$, $\omega_r=0$, and $\Gamma_j=1$, $j=1,\ldots,4.$)}
\label{distri1}
\end{figure}

In Fig. \ref{distri1}, it can be observed that the emitted photon is more likely to be found when $t<0$ as the number of atoms increases, which means it interacts with the Jaynes-Cummings system (the $N=1$ case) more easily than with the Tavis-Cummings model. On the other hand, the Rabi oscillation at $t>0$  indicates that the photon can be repeatedly absorbed and emitted by the two-level atoms. The oscillation also becomes stronger when there are more atoms, which indicates that the added atoms increase the time for the photon to escape from the cavity. The Rabi oscillation monotonically decays when all the atoms are resonant with each other, while revivals can be observed in the non-resonant case (the blue dotted curve). Moreover, oscillation sustains much longer in the non-resonant case than in the resonant case. Finally, simulation shows that in the resonant case, the shapes of the input and output pulses are quite close to each other when $N$ is large, showing that the input single photon hardly interacts with the atomic ensemble. This phenomenon is confirmed by $\lim_{N\to \infty}G[\i \omega]=1$ for any fixed  $\omega$, where $G[s]$ is given in  \eqref{eq:transfer_3}.

\subsection{An analytic form of the superposition state}\label{sec:super}


In this subsection, an analytic form of the joint system-field state is derived.


The following theorem is the main result of this subsection.

\begin{theorem}\label{state}
Assume that the $N$ two-level atoms of the Tavis-Cummings model are resonant with each other, i.e., $\omega_1=\cdots=\omega_N\equiv\omega_s$, the $k$th two-level atom  is in the excited state, the others are in the ground state, the cavity is empty and the Tavis-Cummings model is driven by the vacuum input state. That is, the initial joint system-field state is $|\Psi_k(0)\rangle=\ket{g_1g_2\ldots e_k\ldots g_N0}\otimes\ket{\Phi_0}$. Then the joint system-field state is
\begin{align}
\ket{\Psi_k(t)}=&\ c_k(t)|g_1g_2\cdots e_k\cdots g_N0\Phi_0\rangle \nonumber \\
&+\sum_{j\neq k}^{N}c_j(t)|g_1g_2\cdots e_j\cdots g_N0\Phi_0\rangle  \nonumber \\
&+c_{N+1,k}(t)\int_{0}^{t}\varphi(\tau)dB_{\mathrm{in}}^\ast(\tau)|g_1g_2\cdots g_N0\Phi_0\rangle  \nonumber \\
&+c_{N+2,k}(t)|g_1g_2\cdots g_N1\Phi_0\rangle,
\label{superposition}
\end{align}
where
\begin{align}
&c_k(t)=\frac{e^{\frac{N-2}{2}\mathrm{i}\omega_s t}}{2\chi N \bar{\Gamma}^2}\Bigg[2\chi\sum_{j\neq k}^{N}\Gamma_j^2
+\Gamma_k^2\Big(\lambda_1 e^{-\frac{\lambda_2 t}{4}} -\lambda_2 e^{-\frac{\lambda_1 t}{4}}\Big)\Bigg],  \nonumber \\
&c_j(t)=\frac{-\Gamma_j\Gamma_k e^{\frac{N-2}{2}\mathrm{i}\omega_st}}{2\chi N \bar{\Gamma}^2}\Bigg[2\chi
-\left(\lambda_1 e^{-\frac{\lambda_2 t}{4}} -\lambda_2 e^{-\frac{\lambda_1 t}{4}}\right)\Bigg], j\neq k,  \nonumber \\
&c_{N+1,k}(t)=\frac{\Gamma_k}{\sqrt{N}\bar{\Gamma}}e^{\frac{N-2}{2}\mathrm{i}\omega_st},  \nonumber \\
&c_{N+2,k}(t)=\frac{-4\mathrm{i}\Gamma_k\sinh\left(\frac{1}{4}\chi t\right)}{\chi}
e^{-\frac{1}{4}[\kappa+2\mathrm{i}(\omega_r-(N-1)\omega_s)]t}, \nonumber  \\
&\varphi(\tau)=\frac{-4\mathrm{i}\bar{\Gamma}\sqrt{\kappa N}\sinh\left(\frac{1}{4}\chi \tau\right)}{\chi}e^{-\frac{\kappa \tau + 2\i [\omega_s (\tau-t)+\omega_r \tau - \omega_s t ]}{4}},  \nonumber \\
& \hspace{6ex}  (0\leq \tau\leq t), \label{oct27-1}
\end{align}
with $\lambda_1 = \kappa+\chi+2\mathrm{i}(\omega_r-\omega_s)$, $\lambda_2 = \kappa-\chi+2\mathrm{i}(\omega_r-\omega_s)$, and $\chi=\sqrt{(\kappa+2\mathrm{i}\omega_r-2\mathrm{i}\omega_s)^2-16N \bar{\Gamma}^2}$.
Moreover, in the steady state ($t = \infty$), by ignoring the rotating term $e^{\frac{N-2}{2}\mathrm{i}\omega_st}$ in the coefficients in \eqref{oct27-1}  (equivalently, setting $\omega_s = 0$) the superposition state \eqref{superposition} becomes
\begin{equation}\label{oct27-3}\begin{aligned}
|\Psi_k(\infty)\rangle=&\ c_k(\infty)|g_1g_2\cdots e_k\cdots g_N0\Phi_0\rangle \\
&+\sum_{j\neq k}^{N}c_j(\infty)|g_1g_2\cdots e_j\cdots g_N0\Phi_0\rangle \\
&+c_{N+1,k}(\infty)|g_1g_2\cdots g_N01_{\varphi}\rangle,
\end{aligned}\end{equation}
where
\begin{equation}\label{oct27-4}\begin{aligned}
&c_k(\infty)=\frac{1}{N \bar{\Gamma}^2} \sum_{j\neq k}^{N}\Gamma_j^2,  \ \ c_j(\infty)=\frac{-\Gamma_j \Gamma_k}{N \bar{\Gamma}^2}, ~ j\neq k, \\
&c_{N+1,k}(\infty)=\frac{\Gamma_k}{\sqrt{N} \bar{\Gamma}}.
\end{aligned}\end{equation}
\end{theorem}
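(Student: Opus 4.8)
The plan is to reduce Theorem~\ref{state} to a single-excitation wavepacket computation on a finite-dimensional Hilbert space, using the structure already established for the linear model. First I would observe that since the initial joint state $|\Psi_k(0)\rangle = \ket{g_1 \cdots e_k \cdots g_N 0 \Phi_0}$ carries exactly one excitation and both $H_{\mathrm{TC}}$ and $L=\sqrt{\kappa}a$ preserve the total excitation number (counting atomic excitations, cavity photons, and field photons), the evolved state $\ket{\Psi_k(t)}$ must lie in the single-excitation sector; hence the ansatz \eqref{superposition} is exact, with unknown scalar coefficients $c_j(t)$, $c_{N+1,k}(t)$ and an unknown mode function $\varphi(\tau)$. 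Plugging this ansatz into the stochastic Schr\"odinger equation \eqref{SME_b} and matching the coefficients of the orthogonal kets $\ket{g_1\cdots e_j\cdots g_N 0\Phi_0}$, $\ket{g_1\cdots g_N 1 \Phi_0}$, and the field-excitation part $dB_{\mathrm{in}}^\ast(\tau)\ket{g_1\cdots g_N 0\Phi_0}$ yields a closed linear ODE system for the amplitudes together with the output relation fixing $\varphi$. This is precisely the deterministic linear system $\dot{X} = A X$ with $A$ from \eqref{ABC} (restricted to vacuum input), so the amplitude vector $[c_1,\ldots,c_N,c_{N+2,k}]^\top$ evolves as $e^{At}$ applied to the unit vector $e_k$, and $\varphi(\tau)$ is recovered from $dB_{\mathrm{out}} = L\,dt + dB_{\mathrm{in}}$, i.e. $\varphi(\tau) \propto \sqrt{\kappa}\,c_{N+2,k}(\tau)$ up to the appropriate time-reversal bookkeeping in the $t\to\infty$ limit.

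Second, I would exploit Corollary~\ref{cor:kalman}: in the resonant case $\omega_1=\cdots=\omega_N=\omega_s$, the orthogonal transformation $T$ block-diagonalizes $A$ into the $(N-1)$-dimensional decoherence-free block $\hat A_{\bar c\bar o} = -\mathrm{i}\omega_s I_{N-1}$ and the $2\times 2$ controllable-observable block $\hat A_{co}$ involving only the collective coupling $\sqrt{N}\bar\Gamma$. Writing $e_k$ in the new coordinates splits the dynamics into a trivial phase rotation $e^{-\mathrm{i}\omega_s t}$ on the DFS part (this produces the overall $e^{\frac{N-2}{2}\mathrm{i}\omega_s t}$ factor after regrouping, as the bright component also rotates) and a genuine $2\times 2$ exponential $e^{\hat A_{co} t}$ on the bright part. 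Diagonalizing the $2\times 2$ matrix $\hat A_{co} = -\begin{bmatrix} \mathrm{i}\omega_s & \mathrm{i}\sqrt{N}\bar\Gamma \\ \mathrm{i}\sqrt{N}\bar\Gamma & \mathrm{i}\omega_r + \kappa/2\end{bmatrix}$ gives eigenvalues whose separation is exactly $\chi/4$ with $\chi = \sqrt{(\kappa + 2\mathrm{i}\omega_r - 2\mathrm{i}\omega_s)^2 - 16N\bar\Gamma^2}$, which is where the $\lambda_1,\lambda_2$ and the $\sinh(\chi t/4)$ and $e^{-\lambda_j t/4}$ structure in \eqref{oct27-1} comes from. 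Transforming back via $T$ to the original atomic basis and using $T_1$ (the normalized vector with entries $\Gamma_j/(\sqrt{N}\bar\Gamma)$) to express the bright component redistributes the amplitude onto the individual atoms with weights $\Gamma_j\Gamma_k$, producing the explicit $c_j(t)$ formulas.

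Third, for the steady-state formulas \eqref{oct27-3}–\eqref{oct27-4}, I would take $t\to\infty$ in the finite-$t$ expressions after setting $\omega_s=0$: the $2\times 2$ block $\hat A_{co}$ is Hurwitz (Lemma~\ref{stability} and the remark after it applied to the nondegenerate collective system, or directly checking $\mathrm{Re}\,\lambda_j < 0$ since $\kappa > 0$), so the cavity amplitude $c_{N+2,k}(t)\to 0$, the bright-mode atomic amplitude decays to zero, and only the DFS projection of $e_k$ survives among the atomic amplitudes — that projection is $e_k - (\Gamma_k/(\sqrt N\bar\Gamma)) T_1$ in the bright direction removed, giving $c_j(\infty) = \delta_{jk} - \Gamma_j\Gamma_k/(N\bar\Gamma^2)$, which matches \eqref{oct27-4}. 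The escaped photon carries the remaining probability $\Gamma_k^2/(N\bar\Gamma^2)$ in a definite wavepacket $\ket{1_\varphi}$ obtained by integrating the output relation over all time.

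The main obstacle I anticipate is not conceptual but bookkeeping: the time-domain stable-inverse / time-reversal argument needed to convert the Heisenberg-picture output amplitude $\sqrt{\kappa}\,c_{N+2,k}(\tau)$ into the correctly-normalized Schr\"odinger-picture mode function $\varphi(\tau)$ with the right phase $e^{-[\kappa\tau + 2\mathrm{i}(\omega_s(\tau - t) + \omega_r\tau - \omega_s t)]/4}$ in \eqref{oct27-1}, and checking that the resulting state is normalized ($\sum_j |c_j(t)|^2 + \int_0^t|\varphi(\tau)|^2 d\tau + |c_{N+2,k}(t)|^2 = 1$ for all $t$, which should follow from unitarity of $U(t,t_0)$ but must be verified against these explicit expressions). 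The $2\times 2$ matrix exponential and the back-transformation through $T$ are routine, so the only genuinely delicate points are getting every phase factor and every $\sqrt{N}$ consistent, and handling the degenerate case $\chi = 0$ (where the $\sinh$ and the difference of exponentials must be interpreted as limits) — I would either note that \eqref{oct27-1} extends continuously to $\chi = 0$ or treat it as a separate easy subcase.
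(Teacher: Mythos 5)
Your proposal is correct and is essentially the paper's own proof: the paper establishes Theorem \ref{state} by specializing the recursive transition-matrix formula \eqref{oct23_+1} of Section \ref{sec:alter} to $m=1$ and a single excitation (see Remark \ref{rem:for thm 5.1}) --- i.e., evolving the vacuum-component amplitudes with the non-Hermitian generator $-\mathrm{i}H_{\rm eff}$ and applying $L$ at the emission time followed by free propagation --- which is exactly your substitute-the-ansatz-into-\eqref{SME_b}-and-match-coefficients argument, with Corollary \ref{cor:kalman} supplying the same $2\times 2$ diagonalization that produces $\chi$, $\lambda_1$, $\lambda_2$. The one point to watch is that the single-excitation Schr\"odinger-picture generator is $-\mathrm{i}H_{\rm eff}=A+\tfrac{\mathrm{i}}{2}N\omega_s I$ rather than $A$ itself (the extra global phase is precisely the origin of the $e^{\frac{N-2}{2}\mathrm{i}\omega_s t}$ prefactors), a discrepancy you correctly flag under phase bookkeeping.
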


The proof of Theorem \ref{state} follows the recursive formula for joint system-field states of  general quantum systems; see Remark \ref{rem:for thm 5.1} in Section \ref{sec:alter}.
\begin{remark}
By Theorem \ref{state}, it can be seen that  the cavity is eventually empty ($c_{N+2,k}(\infty)=0$), which results in a superposition state of the two-level atoms and the output field. Moreover, by  \eqref{oct27-4}, the steady-state excitation probabilities of two-level atoms are independent of the cavity resonant frequency $\omega_r$ as well as the  atomic transition frequency $\omega_s$. 
\end{remark}

It is well-known in quantum optics  that the collective radiation of an ensemble of two-level atoms can be accelerated by superradiance or inhibited by subradiance \cite{Dicke54,TC68,WW+PRL20}, the following corollary gives the steady state of the Tavis-Cummings model initialized in a superposition state of the superradiant and subradiant states.

\begin{corollary}\label{cor1}
Assume that the $N$ two-level atoms in the Tavis-Cummings model are resonant with each other, i.e., $\omega_1=\cdots=\omega_N\equiv\omega_s$.  Let  the  system be driven by the vacuum input state and initialized in the following superposition state $(\alpha\ket{B_N}+\beta\ket{D_N})\otimes \ket{0}$,
where $|\alpha|^2+|\beta|^2=1$, and the single-excitation superradiant state $\ket{B_N}$ and subradiant state $\ket{D_N}$ are respectively
\begin{equation}\label{oct22-2}\begin{aligned}
&\ket{B_N}=\frac{1}{\sqrt{N}}\sum_{k=1}^{N}\ket{g_1\cdots e_k\cdots g_N}, \\
&\ket{D_N}=\frac{1}{\sqrt{N}}\sum_{k=1}^{N}e^{-\mathrm{i}\phi_k}
\ket{g_1\cdots e_k\cdots g_N} \\
\end{aligned}\end{equation}
with $\phi_k=\frac{2\pi}{N}k$. Then the joint system-field state is
\begin{equation}\label{oct22-3}
|\Psi^\prime(t)\rangle=\frac{1}{\sqrt{N}}\sum_{k=1}^{N}\left(\alpha+\beta e^{-\mathrm{i}\phi_k}\right)|\Psi_k(t)\rangle,
\end{equation}
where $|\Psi_k(t)\rangle$ is given in \eqref{superposition}. Moreover, in the steady state ($t=\infty$), by ignoring the rotating term $e^{\frac{N-2}{2}\mathrm{i}\omega_st}$ in the coefficients in \eqref{oct27-1}  (equivalently, setting $\omega_s = 0$), the superposition state \eqref{oct22-3} becomes
\begin{equation}\label{feb2-1}
|\Psi^\prime(\infty)\rangle=\frac{1}{\sqrt{N}}\sum_{k=1}^{N}\left(\alpha+\beta e^{-\mathrm{i}\phi_k}\right)|\Psi_k(\infty)\rangle,
\end{equation}
where $|\Psi_k(\infty)\rangle$ is given in \eqref{oct27-3}.
\end{corollary}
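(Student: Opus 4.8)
The plan is to exploit the linearity of the stochastic Schr\"odinger equation \eqref{SME_b} in the initial state. Since the input field is vacuum and the dynamics is governed by the fixed unitary $U(t,t_0)$ (equivalently, by $\ket{\Psi(t)}=U(t,t_0)\ket{\Psi(0)}$), the map from the initial joint state to $\ket{\Psi(t)}$ is linear. Writing the initial superradiant/subradiant superposition in the single-excitation atomic basis, we have
\begin{equation*}
(\alpha\ket{B_N}+\beta\ket{D_N})\otimes\ket{0}\otimes\ket{\Phi_0}
=\frac{1}{\sqrt{N}}\sum_{k=1}^N\left(\alpha+\beta e^{-\mathrm{i}\phi_k}\right)\ket{g_1\cdots e_k\cdots g_N0\Phi_0},
\end{equation*}
using \eqref{oct22-2}. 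Each term $\ket{g_1\cdots e_k\cdots g_N0\Phi_0}$ is precisely the initial state $\ket{\Psi_k(0)}$ appearing in Theorem \ref{state}. Applying $U(t,t_0)$ term by term and invoking Theorem \ref{state} for each $k$ immediately yields \eqref{oct22-3}. Then \eqref{feb2-1} follows by taking $t\to\infty$ and substituting the steady-state expression \eqref{oct27-3} for each $\ket{\Psi_k(\infty)}$; the interchange of the limit with the finite sum over $k$ is trivial since the sum has only $N$ terms.

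The one point that deserves care is that the total excitation number is conserved: the initial state lies in the single-excitation sector, the coupling operator $L=\sqrt{\kappa}a$ together with $H_{\mathrm{TC}}$ preserves the total excitation number $a^\ast a+\sum_j\sigma_{+,j}\sigma_{-,j}+\Lambda$ (with $\Lambda$ the field excitation number), and the vacuum input injects no excitations. Hence each $\ket{\Psi_k(t)}$ remains in the single-excitation manifold, and so does the superposition \eqref{oct22-3}; this consistency is already guaranteed by the form of \eqref{superposition}. No new estimates are needed.

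The main (and only) obstacle is essentially bookkeeping: one must verify that the phases $\phi_k=\frac{2\pi}{N}k$ do not cause the output-field components $\int_0^t\varphi(\tau)dB_{\mathrm{in}}^\ast(\tau)\ket{\cdots}$ coming from different $k$ to combine in a nontrivial way. They do not, because in Theorem \ref{state} the pulse shape $\varphi(\tau)$ and the coefficient $c_{N+1,k}(t)=\frac{\Gamma_k}{\sqrt{N}\bar\Gamma}e^{\frac{N-2}{2}\mathrm{i}\omega_s t}$ are the same function of $\tau$ (up to the $k$-dependent scalar $\Gamma_k$) for every $k$, so the sum \eqref{oct22-3} is already in its simplest form and no further reduction of the field part is possible or claimed. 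Thus the corollary is a direct, essentially immediate consequence of Theorem \ref{state} and the superposition principle for the linear Schr\"odinger dynamics \eqref{SME_b}.
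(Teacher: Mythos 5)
Your proof is correct: the paper actually omits the proof of Corollary \ref{cor1} precisely because it is, as you show, an immediate consequence of Theorem \ref{state} and the linearity of $U(t,t_0)$ applied to the decomposition $(\alpha\ket{B_N}+\beta\ket{D_N})\otimes\ket{0}\otimes\ket{\Phi_0}=\frac{1}{\sqrt{N}}\sum_{k=1}^N(\alpha+\beta e^{-\mathrm{i}\phi_k})\ket{\Psi_k(0)}$. Your additional remarks on excitation-number conservation and on the $k$-independence of the pulse shape $\varphi(\tau)$ are accurate and consistent with the paper's subsequent use of the corollary.
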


The proof of Corollary \ref{cor1} is omitted.

Let $\alpha=0$, $\beta=1$, i.e., the Tavis-Cummings model is initialized in the pure state $|D_N0\rangle$. By Corollary \ref{cor1}, the joint system-field steady state is
\begin{equation}\label{oct24-2}
|\Psi^\prime(\infty)\rangle=\frac{1}{\sqrt{N}}\sum_{k=1}^{N}c_k^\prime(\infty)|g_1\cdots e_k\cdots g_N0\Phi_0\rangle,
\end{equation}
where
\begin{equation}
c_k^\prime(\infty)=\frac{e^{-\mathrm{i}\phi_k}\sum_{j\neq k}^{N}\Gamma_j^2-\Gamma_k\sum_{j\neq k}^{N}e^{-\mathrm{i}\phi_j}\Gamma_j}{N\bar{\Gamma}^2},
\end{equation}
provided that $\omega_s=0$.
Assume further that the atoms are all equally coupled with the cavity, i.e., $\Gamma_1=\Gamma_2=\cdots=\Gamma_N$, then the steady state \eqref{oct24-2} reduces to
\begin{equation}\label{feb22-4}
|\Psi^\prime(\infty)\rangle=|D_N0\rangle\otimes\ket{\Phi_0}.
\end{equation}
In other words, in the steady state, the single excitation only exists in the two-level atoms which have the same excitation probability $\frac{1}{N}$. The Tavis-Cummings system cannot emit a photon into the cavity or  the output field. This theoretical result is consistent with the experimental results given in \cite[Fig. 2(c), Fig. 4(a)]{WW+PRL20}, where small fluctuations of the collective swapping dynamics are due to the inhomogeneity of the coupling strengths. In fact, when $\Gamma_1=\Gamma_2=\cdots=\Gamma_N$ and $\omega_s=0$, by Corollary \ref{cor:kalman}, $\hat{A}_{\bar{c}\bar{o}}=0$, and thus the subsystem $(\hat{A}_{\bar{c}\bar{o}}, \hat{B}_{\bar{c}\bar{o}},\hat{C}_{\bar{c}\bar{o}})$ is static. On the other hand, it is easy to show that $\ket{\Psi(t)}\equiv \ket{\Psi(0)} = |D_N0\rangle\otimes\ket{\Phi_0}$ for all $t\geq0$.

Let $\alpha=1$, $\beta=0$, i.e., the Tavis-Cummings model is initialized in the pure state $|B_N0\rangle$. In this case, the steady state is
\begin{equation}\label{oct24-3}\begin{aligned}
|\Psi^\prime(\infty)\rangle=&\frac{1}{\sqrt{N}}\sum_{k=1}^{N}\bigg(c^\prime_k(\infty)|g_1\cdots e_k \cdots g_N0\Phi_0\rangle \\
&+c_{N+1,k}(\infty)|g_1g_2 \cdots g_N01_{\varphi}\rangle\bigg),
\end{aligned}\end{equation}
where the pulse shape $\varphi$ is given in \eqref{oct27-1}, and $c^\prime_k(\infty)=\frac{\sum_{j\neq k}^{N}\Gamma_j^2-\Gamma_k\sum_{j\neq k}^{N}\Gamma_j}{N\bar{\Gamma}^2}$,
provided that $\omega_s=0$,
and $c_{N+1,k}(\infty)$ is given in \eqref{oct27-4}. Assume further that the atoms are all equally coupled with the cavity, i.e., $\Gamma_1=\Gamma_2=\cdots=\Gamma_N$, the steady state \eqref{oct24-3} reduces to
\begin{equation} \label{eq:superradiant}
|\Psi^\prime(\infty)\rangle=|g_1g_2 \cdots g_N01_{\varphi}\rangle.
\end{equation}
In other words, due to the existence of cavity decay rate $\kappa$, the Tavis-Cummings system eventually emits a photon into the field.

\begin{remark}\label{rem: sup_sub}
The superradiant state $\ket{B_N}$ and subradiant state $\ket{D_N}$ can be represented by the eigenvectors given in \eqref{feb22-1}, respectively. For simplicity, we assume the coupling strengths $\Gamma_1=\cdots=\Gamma_N\equiv \Gamma$. If the state $|g_1\cdots e_k \cdots g_N0\rangle$ is viewed as a column vector $[0\cdots 1 \cdots 0]^\top$ (the $k$th element is $1$ and the others are $0$), then $|B_N0\rangle$ is exactly the vector $T_N$ in \eqref{feb22-1}. Moreover, $A|B_N0\rangle=-\mathrm{i}\omega_sT_N-\mathrm{i}\sqrt{N}\Gamma T_{N+1}$,
which lives in the linear span of the two vectors $T_{N}$ and $T_{N+1}$ of the controllable/observable subsystem $\hat{A}_{co}$ given in \eqref{feb22-2}.  Hence, the single excitation swaps among all the two-level atoms and the cavity. This echos the collective swapping dynamics shown in \cite[Fig. 2(a)]{WW+PRL20}. Moreover, as shown in  \eqref{eq:superradiant}, eventually all the oscillations will die out  as the photon is emitted into the external field due to the lossy nature of the cavity. On the other hand, the subradiant state $|D_N0\rangle$ can be represented by
\begin{equation}\label{feb23-1}
|D_N0\rangle=\sum_{j=1}^{N-1}\alpha_jT_j,
\end{equation}
where $\alpha_j=-\sqrt{\ff{j+1}{jN}}\left[e^{-\mathrm{i}\phi_{j+1}}+\frac{1}{j+1}\sum_{k=j+2}^{N}e^{-\mathrm{i}\phi_k}\right]$,
for $j=1,2,\ldots,N-2$, and $\alpha_{N-1}=-\frac{1}{\sqrt{N-1}}e^{-\mathrm{i}\phi_N}$.
Therefore, the subradiant state given in \eqref{feb23-1} is a linear combination of the $N-1$ eigenvectors of the DFS given in \eqref{feb22-2}. In this case, the Tavis-Cummings model is neither reachable by its input nor detectable by its output. Thus, the two-level atoms are decoupled from the cavity mode and the external field, and  the Tavis-Cummings model initialized in $|D_N0\rangle$ cannot emit a photon into the cavity or even the external field, and remains in the subradiant state \eqref{feb22-4}. 
\end{remark}

\section{The multi-excitation case}\label{sec:excitation}

In Section \ref{sec:single_excitation}, we studied the single-excitation dynamics of the Tavis-Cummings model. In this section, we present numerical studies of the multi-excitation scenario. For ease of representation, we calculate the excitation probabilities of the {\it first} two-level atom.

\subsection{The reduced density matrix of the first two-level atom}\label{sec:density}
In  this subsection,  we present the master equation for the Tavis-Cummings model; we also give the expression of the reduced density of the first two-level atom.



According to  \cite{GJNC12},  the master equation of the Tavis-Cummings model driven by a single-photon input $\ket{1_\xi}$ is
\begin{equation}\label{master}\begin{aligned}
\dot{\bar{\rho}}^{11}(t)&=\mathcal{L}^\star\bar{\rho}^{11}(t)+\xi(t)[\bar{\rho}^{01}(t),L^\ast]+\xi^\ast(t)[L,\bar{\rho}^{10}(t)], \\
\dot{\bar{\rho}}^{10}(t)&=\mathcal{L}^\star\bar{\rho}^{10}(t)+\xi(t)[\bar{\rho}^{00}(t),L^\ast], \\
\dot{\bar{\rho}}^{01}(t)&=\mathcal{L}^\star\bar{\rho}^{01}(t)+\xi^\ast(t)[L,\bar{\rho}^{00}(t)], \\
\dot{\bar{\rho}}^{00}(t)&=\mathcal{L}^\star\bar{\rho}^{00}(t), \ t\geq t_0,
\end{aligned}\end{equation}
where the initial states are
\begin{equation} \label{eq:jun7_initial}
\bar{\rho}^{11}(t_0)=\bar{\rho}^{00}(t_0)=|\zeta\rangle\langle\zeta|, ~ \bar{\rho}^{10}(t_0)=\bar{\rho}^{01}(t_0)=0
\end{equation}
with $\ket{\zeta}$ being the initial system state.  As we will focus on the excitation probability of the first two-level atom, we use the partial trace to get its reduced density operator
\begin{equation}\label{sep5}\begin{aligned}
\rho_{A_1}(t)\triangleq&\ \mathrm{Tr}_{A_N}[\cdots\mathrm{Tr}_{A_2}[\mathrm{Tr}_{\mathrm{cav}}[\bar{\rho}^{11}(t)]]] \\
=&\sum\langle z_2z_3\ldots z_Nn|\bar{\rho}^{11}(t)|z_2z_3\ldots z_Nn\rangle,
\end{aligned}\end{equation}
where $|z_j\rangle=|g_j\rangle$ or $|e_j\rangle$, $j=2,3,\ldots,N$, and $\ket{n}$ is the state of the cavity.

\begin{remark}
When the input is in the vacuum state $\ket{\Phi_0}$,  \eqref{master}-\eqref{eq:jun7_initial} reduce to the commonly used master equation
\begin{equation}\label{eq:master_vac}
\dot{\bar{\rho}}^{00}(t)=\mathcal{L}^\star\bar{\rho}^{00}(t), \ \ \ \bar{\rho}^{00}(t_0) = \ket{\zeta}\bra{\zeta}, \ t\geq t_0.
\end{equation}
 Accordingly, the excitation probability of the first two-level atom is given by $\braket{e_1|\rho_{A_1}(t)|e_1}$, where  $\rho_{A_1}(t)$ should be computed via  \eqref{sep5} by replacing $\bar{\rho}^{11}(t)$ with $\bar{\rho}^{00}(t)$.
\end{remark}

In the following simulations, we assume that there are three resonant two-level atoms that are equally coupled to the cavity, i.e., $\omega_1=\omega_2= \omega_3=\omega_r=1$ and $\Gamma_1=\Gamma_2=\Gamma_3=1$. Also let $\kappa=1.5$.  When the input field is in the single photon state $\ket{\Phi_1}$, we use a Gaussian pulse shape
\begin{equation}\label{sep5a}
\xi(t)=\left(\frac{\Omega^2}{2\pi}\right)^{1/4}\exp\left[-\frac{\Omega^2}{4}(t-t_p)^2\right],
\end{equation}
where $t_p$ is the peak arrival time of the photon, and $\Omega$ is the frequency bandwidth.  Set $t_p=3$ and $\Omega=2\kappa$. See the green solid curve in Fig. \ref{single+excited} for the plot of  $|\xi(t)|^2$. As $|\xi(t)|^2\approx 0$ when $t=0$, it is safe to let the initial time $t_0$ be 0.

\subsection{One initially excited atom plus a single-photon input}\label{subsec:1+1}

Assume that one of the two-level atoms is in the excited state, the cavity is empty, and the system is driven by a single-photon input. Then there are two excitations in the whole system. According to the reduced density matrix \eqref{sep5}, the excitation probability of the first two-level atom in this  two-excitation case can be calculated as
\begin{equation*}\label{sep18}\begin{aligned}
&P_{\mathrm{TLS}1}(t) =\braket{e_1|\rho_{A_1}(t)|e_1} \\
=&\langle e_1g_2g_30|\bar{\rho}^{11}(t)|e_1g_2g_30\rangle+\langle e_1e_2g_30|\bar{\rho}^{11}(t)|e_1e_2g_30\rangle \\
&+\langle e_1g_2e_30|\bar{\rho}^{11}(t)|e_1g_2e_30\rangle+\langle e_1g_2g_31|\bar{\rho}^{11}(t)|e_1g_2g_31\rangle,
\end{aligned}\end{equation*}
where the term $\langle e_1g_2g_30|\bar{\rho}^{11}(t)|e_1g_2g_30\rangle$ indicates that a photon is in the external field.

\begin{figure}[htp!]
\centering
\includegraphics[scale=0.44]{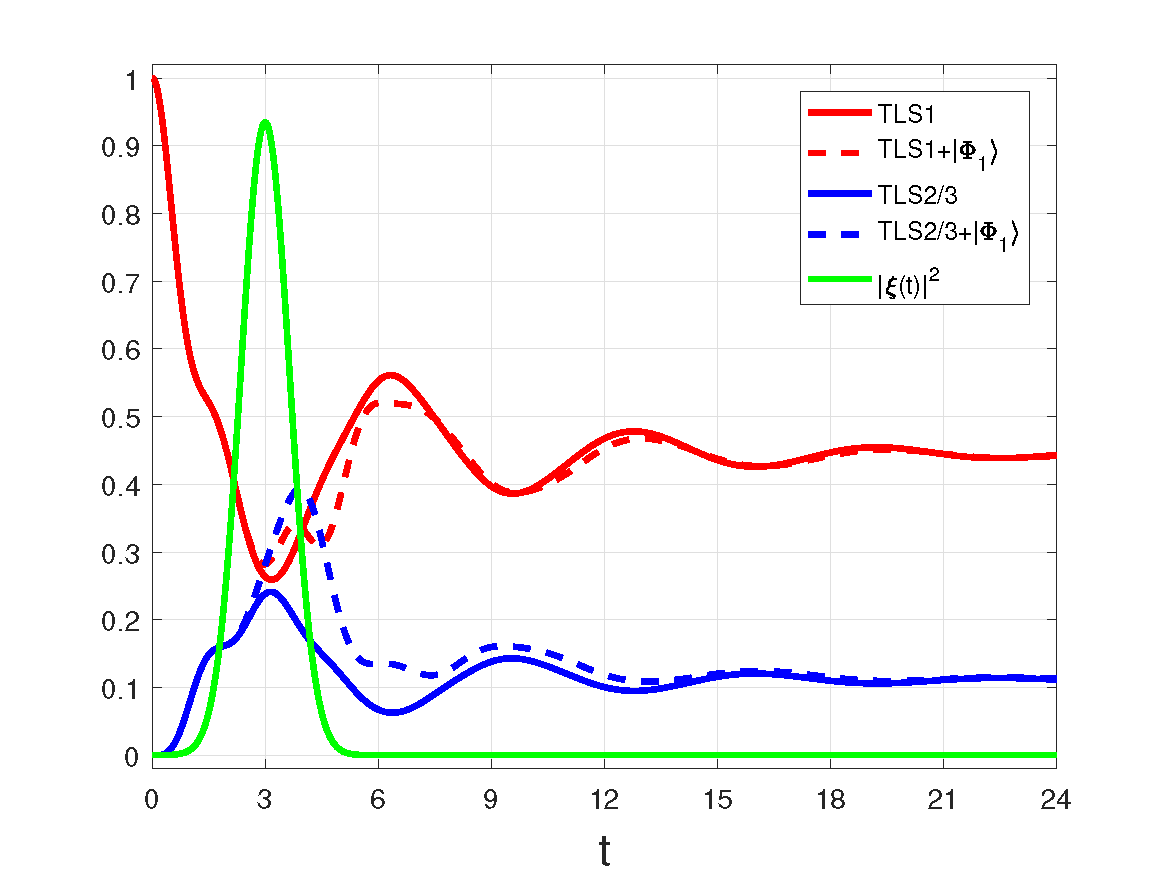}
\caption{The excitation probabilities of the first two-level atom. Green solid curve: $|\xi(t)|^2$;  red solid curve (TLS1: $\ket{e_1g_2g_30}\otimes \ket{\Phi_0}$); blue solid curve  (TLS2/3: $\ket{g_1e_2g_30}\otimes \ket{\Phi_0}$ or $\ket{g_1g_2e_30}\otimes \ket{\Phi_0}$); red dashed curve (TLS1+$\ket{\Phi_1}$: $\ket{e_1g_2g_30}\otimes \ket{\Phi_1}$); blue dashed curve (TLS2/3+$\ket{\Phi_1}$: $\ket{g_1e_2g_30}\otimes \ket{\Phi_1}$ or $\ket{g_1g_2e_30}\otimes \ket{\Phi_1}$).}
\label{single+excited}
\end{figure}

We consider the effect of single-photon input state on the excitation probability, which is simulated in Fig. \ref{single+excited}. The red solid curve is for the case when the initial joint system-field state $\ket{\Psi(0)}=\ket{e_1g_2g_30}\otimes\ket{\Phi_0}$, the blue solid curve is for  $\ket{\Psi(0)}=\ket{g_1e_2g_30}\otimes\ket{\Phi_0}$ or $\ket{\Psi(0)}=\ket{g_1g_2e_30}\otimes\ket{\Phi_0}$, the   red dashed curve is for $\ket{\Psi(0)}=\ket{e_1g_2g_30}\otimes\ket{\Phi_1}$, and  the blue dashed curve is for $\ket{\Psi(0)}=\ket{g_1e_2g_30}\otimes\ket{\Phi_1}$ or $\ket{\Psi(0)}=\ket{g_1g_2e_30}\otimes\ket{\Phi_1}$. Firstly, the incident single photon hardly interacts with the first two-level atom when the time $t$ is far earlier than the peak arrival time $t_p=3$, which is similar to the case when a single photon of Gaussian pulse shape is used to excite a two-level atom; see \cite[Fig. 4(a)]{WMS+11}.  Secondly, the solid red curve is above the solid blue cure for all time. However there is crossover between the dashed curves. Thirdly, the input photon does not affect the steady-state excitation probability of the first TLS (TLS1); however, it does affect the transient dynamics. Finally, from the dashed curves it can be seen that the incident photon tends to increase (or decrease) the excitation probability of the first two-level atom when it is initialized in the ground (or excited) state. 

\subsection{Two initially excited atoms plus vacuum input}\label{subsec:2+0}


\begin{figure}[htp!]
\centering
\includegraphics[scale=0.43]{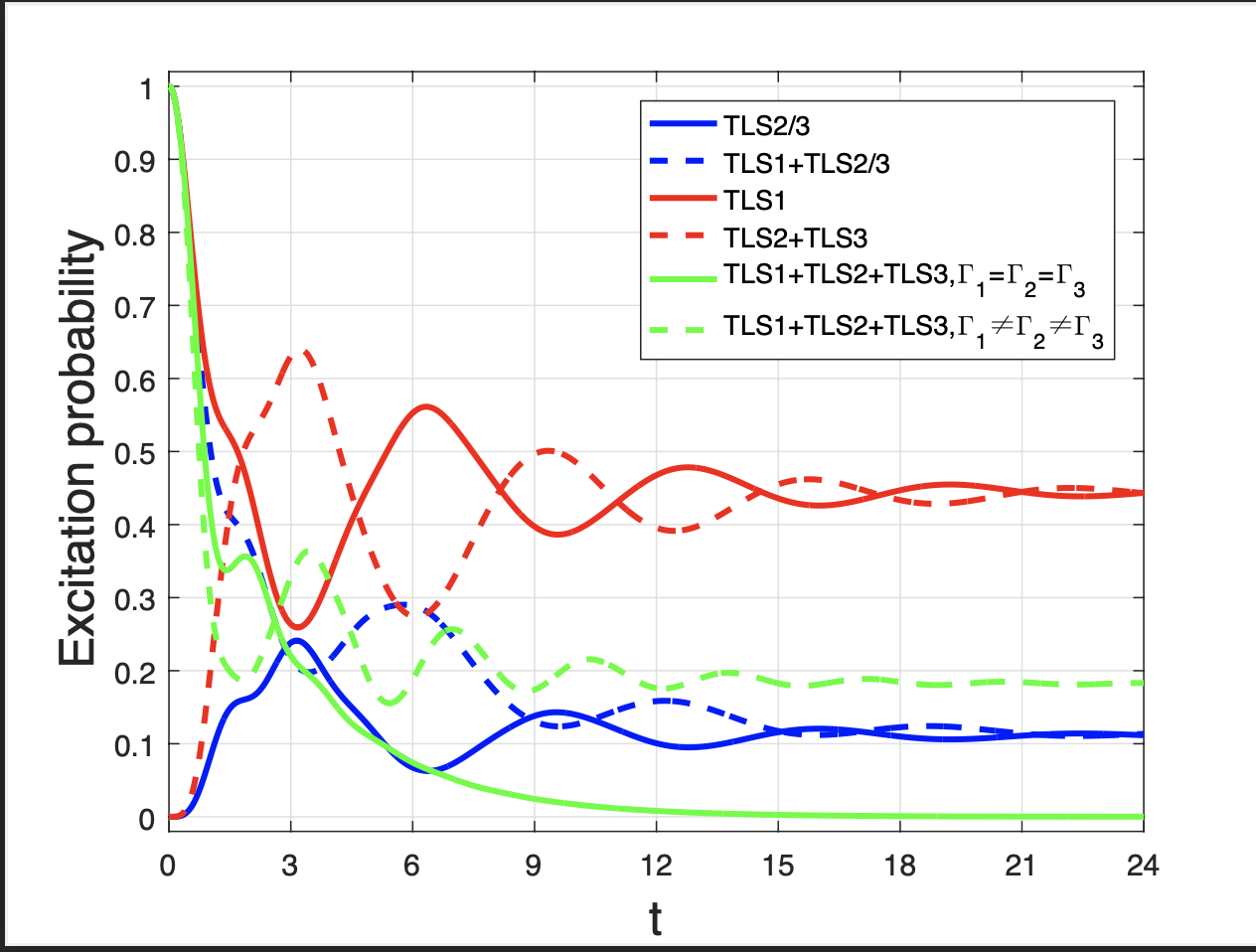}
\caption{The excitation probabilities of the first two-level atom. Red solid curve (TLS1: $\ket{e_1g_2g_30}\otimes\ket{\Phi_0}$);
blue solid curve (TLS2/3: $\ket{g_1e_2g_30}\otimes\ket{\Phi_0}$ or $\ket{g_1g_2e_30}\otimes\ket{\Phi_0}$);  red dashed  curve (TLS2 + TLS3: $\ket{g_1e_2e_30}\otimes\ket{\Phi_0}$);  blue dashed curve (TLS1+ TLS2/3: $\ket{e_1e_2g_30}\otimes\ket{\Phi_0}$ or $\ket{e_1g_2e_30}\otimes\ket{\Phi_0}$);  green solid curve (TLS1+TLS2 + TLS3:$\ket{e_1e_2e_30}\otimes\ket{\Phi_0}$; green dashed curve (TLS1+TLS2+TLS3: $\ket{e_1e_2e_30}\otimes\ket{\Phi_0}$ and $\Gamma_1\neq \Gamma_2\neq \Gamma_3$).
}
\label{two_excited_new}
\end{figure}

In the following simulations, we assume that the Tavis-Cummings  system is driven by  the vacuum input $\ket{\Phi_0}$ and one, two, or even three atoms are initially excited.

We present the following theoretical result first.

\begin{theorem}\label{thm:ss_photon}
Suppose that the Tavis-Cummings system \eqref{system1} is initialized in the state $\ket{\zeta} = \ket{e_1\ldots e_N0}$ and driven by the vacuum input field.  If $\omega_1=\ldots =\omega_N$ and $\Gamma_1=\ldots =\Gamma_N$, then the steady-state ($t = \infty$) output field must be in an $N$-photon state.
\end{theorem}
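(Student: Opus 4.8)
\noindent\textit{Proof idea.}\quad The plan is to show that, as $t\to\infty$, the joint system--field state relaxes to a product of the collective ground state $\ket{g_1\cdots g_N0}$ with an $N$-photon output field state; the $N$-photon claim then follows at once from conservation of the total excitation number.

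\emph{Reduction and conserved quantities.} Since $\omega_1=\cdots=\omega_N\equiv\omega_s$ and $\Gamma_1=\cdots=\Gamma_N\equiv\Gamma$, both $H_{\mathrm{TC}}$ and $L=\sqrt{\kappa}a$ are invariant under permutations of the atoms; introducing the collective operators $J_-\triangleq\sum_{j=1}^N\sigma_{-,j}=J_+^\ast$ one has $H_{\mathrm{TC}}=\omega_r a^\ast a+\frac{\omega_s}{2}\sum_j\sigma_{z,j}+\Gamma(a^\ast J_-+J_+a)$. The initial atomic state $\ket{e_1\cdots e_N}$ is the top Dicke state, hence permutation-symmetric, so the reduced system state $\bar\rho^{00}(t)$, which evolves by the vacuum master equation $\dot{\bar\rho}^{00}=\mathcal{L}^\star\bar\rho^{00}$ in \eqref{eq:master_vac}, stays supported on the $(N+1)$-dimensional symmetric (Dicke) atomic sector for all $t$. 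Moreover, because the input is vacuum there is no external source of excitations, and $H_{\mathrm{TC}}$ commutes with the system excitation number $\hat N_{\mathrm{sys}}\triangleq a^\ast a+\sum_j\sigma_{+,j}\sigma_{-,j}$ while $L=\sqrt{\kappa}a$ merely transfers one intracavity photon into the output channel at a time; hence the total excitation number is conserved, so $\ket{\Psi(t)}$ is at every time an eigenvector of $\hat N_{\mathrm{sys}}+\hat N_{\mathrm{out}}$ with eigenvalue $N$, where $\hat N_{\mathrm{out}}$ is the photon-number operator of the output field. In particular $\bar\rho^{00}(t)$ is supported on $\{\hat N_{\mathrm{sys}}\le N\}$, so the whole dynamics takes place in a fixed finite-dimensional invariant subspace.

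\emph{Relaxation to the ground state.} A short computation with $\mathcal{D}_L$ gives $\mathcal{L}\hat N_{\mathrm{sys}}=-L^\ast L=-\kappa\,a^\ast a$, so $\frac{d}{dt}\mathrm{Tr}[\hat N_{\mathrm{sys}}\bar\rho^{00}(t)]=-\kappa\,\mathrm{Tr}[a^\ast a\,\bar\rho^{00}(t)]\le0$ and $\mathrm{Tr}[\hat N_{\mathrm{sys}}\bar\rho^{00}(t)]$ decreases monotonically from $N$; thus $\int_0^\infty\kappa\,\mathrm{Tr}[a^\ast a\,\bar\rho^{00}(t)]\,dt\le N<\infty$. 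Let $\mathcal{A}$ be the (nonempty, $e^{s\mathcal{L}^\star}$-invariant) set of limit points of $\bar\rho^{00}(t)$ as $t\to\infty$; $\mathrm{Tr}[\hat N_{\mathrm{sys}}\,\cdot]$ is constant on $\mathcal{A}$, so for every $\rho\in\mathcal{A}$ the identity $\frac{d}{ds}\mathrm{Tr}[\hat N_{\mathrm{sys}}\,e^{s\mathcal{L}^\star}\rho]=-\kappa\,\mathrm{Tr}[a^\ast a\,e^{s\mathcal{L}^\star}\rho]$ forces $\mathrm{Tr}[a^\ast a\,e^{s\mathcal{L}^\star}\rho]\equiv0$; writing $\rho=\rho_{\mathrm{at}}\otimes\ket{0}\bra{0}$ (empty cavity, $\rho_{\mathrm{at}}$ symmetric) and differentiating this identity once more in $s$ at $s=0$ yields $\mathrm{Tr}[J_+J_-\rho_{\mathrm{at}}]=0$, i.e. $J_-\rho_{\mathrm{at}}=0$. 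On the symmetric sector $\ker J_-=\mathbb{C}\ket{g_1\cdots g_N}$, so $\rho_{\mathrm{at}}=\ket{g_1\cdots g_N}\bra{g_1\cdots g_N}$ and $\mathcal{A}=\{\ket{g_1\cdots g_N0}\bra{g_1\cdots g_N0}\}$; hence $\bar\rho^{00}(t)\to\ket{g_1\cdots g_N0}\bra{g_1\cdots g_N0}$. Since $\ket{\Psi(t)}$ is pure (unitary evolution of a pure initial state) and its system marginal converges to this pure state, the joint steady state is the product $\ket{\Psi(\infty)}=\ket{g_1\cdots g_N0}\otimes\ket{\psi_{\mathrm{out}}}$; applying the conservation law, $\hat N_{\mathrm{sys}}\ket{\Psi(\infty)}=0$ gives $\hat N_{\mathrm{out}}\ket{\psi_{\mathrm{out}}}=N\ket{\psi_{\mathrm{out}}}$, i.e. the steady-state output field is an $N$-photon state.

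\emph{Main obstacle.} The delicate part is the relaxation step, namely excluding residual atomic excitation in the steady state. This is exactly where the equal-coupling, equal-detuning hypothesis is indispensable: it confines the dynamics to the symmetric Dicke manifold, on which the only cavity-decoupled atomic state is the ground state, whereas the Tavis--Cummings dark/subradiant states --- which, if populated, would be trapped with nonzero excitation --- all lie \emph{outside} this manifold and are never reached from $\ket{e_1\cdots e_N0}$. If the $\Gamma_j$ differ, the symmetric subspace is no longer invariant and subradiant components can persist, consistently with the numerical behaviour in Fig.~\ref{two_excited_new}.
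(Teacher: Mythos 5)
Your proof is correct, and it takes a genuinely different --- and in the key step, more rigorous --- route than the paper's. Both arguments hinge on the same two facts: the limiting system state must have an empty cavity, and the resulting atomic state must satisfy $\sum_j\sigma_{-,j}\rho_{\mathrm{at}}=0$. The paper gets there by positing a stationary solution of $\mathcal{L}^\star\bar{\rho}^{00}(\infty)=0$, asserting (without proof) that the cavity must be empty and that the system converges to this stationary state, and then arguing that ``indistinguishability'' of the atoms forces all matrix elements of $\rho_A$ in the computational basis to be equal --- a step that is heuristic at best, since permutation invariance of a density matrix does not imply equality of all its coefficients (the eventual answer $\ket{g_1\cdots g_N}\bra{g_1\cdots g_N}$ is itself a counterexample). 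You replace both weak links: convergence follows from the Lyapunov function $\mathrm{Tr}[\hat N_{\mathrm{sys}}\bar\rho^{00}(t)]$ with $\mathcal{L}\hat N_{\mathrm{sys}}=-\kappa a^\ast a$ plus a LaSalle invariance argument (legitimate here because the excitation bound confines the dynamics to a finite-dimensional, hence compact, state space), and the identification of the limit follows from confining the dynamics to the symmetric Dicke sector, where $\ker J_-$ is one-dimensional. This is exactly where the hypothesis $\Gamma_1=\cdots=\Gamma_N$, $\omega_1=\cdots=\omega_N$ enters, and your closing remark correctly ties its failure to the trapped subradiant population seen in the green dashed curve of Fig.~\ref{two_excited_new}. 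Your final step (purity of the joint state plus convergence of the marginal to a pure state forces factorization, and excitation conservation then counts $N$ output photons) makes explicit what the paper leaves implicit. Two cosmetic points: the invariance of the symmetric sector should be justified by noting that $H_{\mathrm{TC}}$ and $L$ are built from $J_\pm$, $\sum_j\sigma_{z,j}$, $a$, $a^\ast$, all of which commute with the projector onto the symmetric subspace (mere permutation invariance of the density matrix would not suffice); and the derivative of $\mathrm{Tr}[a^\ast a\,e^{s\mathcal{L}^\star}\rho]$ that actually delivers $\mathrm{Tr}[J_+J_-\rho_{\mathrm{at}}]=0$ is the \emph{second} one (the first vanishes identically on empty-cavity states), though since the identity holds for all $s\geq 0$ every derivative at $s=0$ vanishes and your conclusion stands.
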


\begin{proof}
We use the master equation \eqref{eq:master_vac} to prove this result.  The steady-state system state can be found by solving
\begin{equation} \label{eq:master_vac_2}
\mathcal{L}^\star\bar{\rho}^{00}(\infty)=0.
\end{equation}
Clearly, in the steady state ($t=\infty$), the cavity cannot contain any photon; otherwise the photon leaks out from the lossy cavity ($\kappa \neq 0$). In other words, such a state cannot be a steady state. Consequently, the steady-state system state is of the form $\bar{\rho}^{00}(\infty)= \rho_A\otimes \ket{0}\bra{0}$, where $\rho_A$ is the steady state of the two-level atoms. By the form of the system Hamiltonian $H_\mathrm{TC}$ and the coupling operator $L$ given in subsection \ref{sec:des}, \eqref{eq:master_vac_2} is actually
\begin{equation*} \label{eq:master_vac_3}\begin{aligned}
&\frac{1}{2}\sum_j \omega_j [\sigma_{z,j}, \rho_A]  \otimes\ket{0}\bra{0} \\
+&
\sum_j \Gamma_j \bigg(\sigma_{-,j}\rho_A\otimes \ket{1}\bra{0} - \rho_A \sigma_{+,j} \ket{0}\bra{1}  \bigg)=0.
\end{aligned}
\end{equation*}
Given $\omega_1=\ldots =\omega_N\equiv \omega_s$ and $\Gamma_1=\ldots =\Gamma_N\equiv \Gamma$, we have
\begin{equation*}\begin{aligned}
&\frac{\omega_s}{2}\sum_j  [\sigma_{z,j}, \rho_A] \otimes \ket{0}\bra{0}
\\
&+\Gamma\sum_j \bigg(\sigma_{-,j}\rho_A\otimes \ket{1}\bra{0} - \rho_A \sigma_{+,j} \ket{0}\bra{1}  \bigg)=0,
\end{aligned}\end{equation*}
which yields
\begin{equation}  \label{eq:master_vac_4}
\sum_j\sigma_{-,j}\rho_A=0.
\end{equation}
In general, $\rho_A$ is of the form
\[
\sum_{i_1,\ldots,i_N; j_1,\ldots, j_N} \alpha_{i_1,\ldots,i_N; j_1,\ldots, j_N} \ket{f_{i_1}\ldots f_{i_N}}\bra{g_{j_1}\ldots g_{j_N}},
\]
where $f_{i_k}$ and $g_{j_k}$ are either $e_k$ or $g_k$. Because $\omega_1=\ldots =\omega_N$ and $\Gamma_1=\ldots =\Gamma_N$, all the atoms are indistinguishable. As a result, all the coefficients must be identical; i.e., $\alpha_{i_1,\ldots,i_N; j_1,\ldots, j_N}\equiv \alpha$ for some $\alpha$. Consequently,
\begin{equation*}\label{eq:master_vac_5}
\rho_A = \alpha \sum_{i_1,\ldots,i_N; j_1,\ldots, j_N} \ket{f_{i_1}\ldots f_{i_N}}\bra{g_{j_1}\ldots g_{j_N}}.
\end{equation*}
Substituting this form of $\rho_A$ into \eqref{eq:master_vac_4} we get $\sigma_{-,j}\ket{f_{i_1}\ldots f_{i_N}}=0,$ for all $j=1,\ldots,m$.
Clearly, all $f_{i_k}$  must be $g_k$. Consequently, $\rho_A$ contains a single term and is of the form $\rho_A= \ket{g_1\ldots g_N}\bra{g_1\ldots g_N}$. In other words, all atoms are in their ground state. Therefore, in the steady state, the output field is in an $N$-photon state.
\end{proof}

The simulation results are shown in Fig. \ref{two_excited_new}. We have the following observations. (i) The first atom's excitation probability when it is initialized in the excited state (the red solid curve) is greater than that when either the second atom or the third atom is initialized in the excited state (the blue solid curve). (ii) The first atom's excitation probability when both the second and third atoms are initialized in the excited state (the red dashed curve) is greater than that when both the first and second atoms (or both the first and third atoms) are initialized in the excited state (the blue dashed curve). (iii) The red solid and dashed curves have the same final value $\approx0.44$, while the blue solid and  dashed curves  have the same final value $\approx0.11$.  (iv) The excitation probability of the first two-level atom settles to $0$ when all the three atoms are initially excited, see the green solid curve in Fig. \ref{two_excited_new}. Actually, by Theorem \ref{thm:ss_photon}, all the three atoms eventually settle in their ground state and a 3-photon output state is generated. However, if the coupling strengths are not identical, the atoms may not settle to their ground state, as demonstrated by the green dashed curve in Fig. \ref{two_excited_new}, where $\Gamma_1=1$, $\Gamma_2=1.5$, and $\Gamma_3=2$.

\begin{remark}
In Figs. \ref{single+excited} and \ref{two_excited_new}, when the first atom is initialized in the excited state and the other two are in the ground state, the final value of the excitation probability of the first atom is approximately $0.44$.  By Theorem \ref{state}, $|c_1(\infty)|^2=\frac{4}{9}$, which explains this simulation result. The other final value $0.11$ in Figs. \ref{single+excited} and \ref{two_excited_new} can be explained by  $|c_2(\infty)|^2 = |c_3(\infty)=\frac{1}{9}$. Moreover, $|c_k(t)|^2 \to 1$ and $|c_j(t)|^2 \to 0$ ($j\neq k$) as $N\to \infty$. This means that the system tends to remain intact when the number of atoms is sufficiently large. Finally, if $\Gamma_1 = \cdots = \Gamma_N$, then by \eqref{oct27-4} we have  $|c_{N+1,k}(\infty)|^2=\ff{1}{N}$, i.e., only $1/N$ of the excitation energy is radiated. This confirms the analysis in the second paragraph above (29) in \cite{Dicke54}.
\end{remark}

\begin{remark}
Although the red solid and dashed curves have  identical stationary excitation probability, their crests and troughs are almost symmetrical during the transient process. Similar phenomena can be observed when $N=2$. This is consistent with the experimental result given in \cite[Fig. 4c]{MCG07}. Specifically, write states $|e_1g_2\rangle$ and $|g_1e_2\rangle$ as $\ket{\uparrow\downarrow}$ and
$\ket{\downarrow\uparrow}$, respectively. In  \cite[Fig. 4c]{MCG07}, the solid red (green) circles plot the evolution of $\ket{\downarrow\uparrow}$ ( $\ket{\uparrow\downarrow}$). The coherent state transfer of the two states emerges and oscillates symmetrically.
\end{remark}

\section{A general form of the joint system-field states}\label{sec:alter}

In Section \ref{sec:single_excitation}, analytical results are presented for the single-excitation  Tavis-Cummings model. These results are not applicable to the multi-excitation case discussed in Section \ref{sec:excitation}. Motivated by this, in this section we derive a recursive relation for the computation of the joint system-field state when the system initially contains multiple excitations $R>1$ and is driven by $m$ input fields initialized in the vacuum state. 

\subsection{Modeling} \label{subsec:model}

Let $H$ be the internal system Hamiltonian of a quantum system. As the system Hamiltonian may be tuned by a time-varying classical signal, for example, transition frequencies of superconducting qubits can be tuned via Josephson energy in Josephson-junction-based superconducting circuits \cite{BGG+21}, we use $H(t)$ to emphasize the explicit dependence of the system Hamiltonian on time. The coupling between the quantum system and the $k$th ($k=1,\ldots,m$) input channel can be described by a coupling operator $L_k(t)$; again, here we allow explicit time dependence of the coupling as in some quantum systems couplings can be tuned in real time \cite{PSR+14}.

The system of interest may be an ensemble of TLSs, or resonators, or even  an ensemble of TLSs residing in a resonator like the Tavis-Cummings model studied in this paper. If the number of TLSs is finite, the number of photons in the resonator is finite, and the system is driven by a finite number of photons, then the number of total excitations $R$ in the whole system (the system plus the external field) is finite too. In this case, the state of the quantum system  has an orthonormal basis of the form
\begin{equation}\label{basis}
\{\ket{0_s}, \ldots, \ket{{(K-1)}_s}\},
\end{equation}
where the subscript ``$s$" indicates that the  basis states $\ket{j_s}$, $j=0,1,\ldots,K-1$, are for the system. For our Tavis-Cummings mode in Fig. \ref{system}, $K=2^N(R+1)$ as each of the $N$ atoms has two basis states and the cavity can contain at most $R$ photons.

\begin{remark}
If the system of interest is coherently driven, then the number of excitations can be arbitrary large as the drive may generate an arbitrary number of excitations.  However, if the coherent drive is not very strong and lasts not long, the number of excitations in the system will be upper bounded. Hence, it can still be assumed that  the system admits a basis  of finitely many elements.
\end{remark}

For notational convenience, the following notation will be adopted. For each $i=1,\ldots,m$,  we use $t^i_{1\to k_i}$ to denote a set of ordered real numbers $\{t^i_1, \ldots,t^i_{k_i}: t^i_1<\cdots<t^i_{k_i}\}$. Similarly, $1_{t^j_{1\to k_j}}$ is a shorthand of $1_{t^j_1},\cdots, 1_{t^j_{k_j}}$. Let $\int_r^{t^i_{2\to k_i}}$ be the abbreviation for $\int_r^{t^i_{k_i}}\cdots\int_r^{t^i_{2}}$. Finally, for each given non-negative integer $n$, $\sum_{k_1,\ldots,k_m}^n$ means the summation over all  possible combinations of the non-negative integers $k_1,\ldots,k_m$ that satisfy $\sum_{j=1}^m k_j=n$.

The temporal photon-number basis of $n$ photons superposed over $m$ channels is
\begin{equation}\label{n photon basis}\begin{aligned}
\Bigg\{&|1_{t^1_{1}}\rangle\otimes\cdots\otimes|1_{t^1_{k_1}}\rangle\otimes\cdots\otimes|1_{t^m_{1}}\rangle\otimes\cdots\otimes |1_{t^m_{k_m}}\rangle: \\ &t^1_{1}<\cdots<t^1_{k_1},\cdots,t^m_{1}<\cdots<t^m_{k_m},\sum_{i=1}^m k_i =n\Bigg\},
\end{aligned}\end{equation}
where $k_1,\ldots,k_m$ are non-negative integers. Particularly, if $k_i=0$, then there are no photons in channel $i$ and correspondingly the term $|1_{t^i_{1}}\rangle\otimes\cdots\otimes|1_{t^i_{k_i}}\rangle$ reduces to $\ket{\Phi_0}$. As a common practice in quantum physics, the tensor product state in \eqref{n photon basis} is often written as $|1_{t^1_{1}},\ldots, 1_{t^1_{k_1}},\ldots, 1_{t^m_{1}},\ldots, 1_{t^m_{k_m}}\rangle$, which is $\ket{1_{t^1_{1\to k_1}},\ldots,1_{t^m_{1\to k_m}}}$ by means of the notation given in the last paragraph.

Using the temporal photon-number basis in \eqref{n photon basis}, an $m$-channel $n$-photon state can be written as
\begin{equation}\label{eq:n photons}\begin{aligned}
&|n_\xi\rangle=\sum_{k_1,\ldots,k_m}^n\int_{-\infty}^\infty\cdots\int_{-\infty}^\infty \xi(t^1_{1\to k_1},\ldots,t^m_{1\to k_m}) \\
& \ \ \  \ \ \ \  \ \ \  \ \ \ \ \  |1_{t^1_{1\to k_1}},\ldots,1_{t^m_{1\to k_m}}\rangle dt^1_{1\to k_1}\cdots dt^m_{1\to k_m},
\end{aligned}\end{equation}
where the $j$th channel has $k_j$ photons, and $\sum_{j=1}^{m}k_j=n$ is the total photon number.  Clearly, $\xi(t^1_{1\to k_1},\ldots,t^m_{1\to k_m})$ is the probability amplitude of the component that the $j$th channel has photons at time instants $t^j_1,\ldots,t^j_{k_j}$ for all $j=1,\ldots, m$.

Due to the indistinguishability of photons in each channel, for each fixed $j=1,\ldots, m$, $\xi(t^1_{1\to k_1},\ldots,t^m_{1\to k_m})$ is permutation-invariant with respect to indices $\{t^j_{1\to k_j}\}$. Thus, under scaling $\frac{1}{k_1!\cdots k_m!}$, the state $|n_\xi\rangle$ can be rewritten as
\begin{align}
&|n_\xi\rangle \nonumber \\
=&\sum_{k_1,\ldots,k_m}^n\int_{-\infty}^\infty\int_{-\infty}^{t^1_{2\to k_1}} \cdots \int_{-\infty}^\infty \int_{-\infty}^{t^m_{2\to k_m}}
\xi(t^1_{1\to k_1},\ldots,t^m_{1\to k_m})  \nonumber\\
& \ \ \  \ \ \ \  \ \ \   |1_{t^1_{1\to k_1}},\ldots,1_{t^m_{1\to k_m}}\rangle dt^1_{1\to k_1}\cdots dt^m_{1\to k_m}.\label{eq:aug16_5b}
\end{align}

In fact, one can always define an $n$-photon state using \eqref{eq:aug16_5b} with an arbitrary multivariate function $\xi(t^1_{1\to k_1},\ldots,t^m_{1\to k_m})$ (under normalization), then obtain the form in \eqref{eq:n photons} by permutating indices $\{t^j_{1\to k_j}\}$ for all $j=1,\ldots,m$. The form of $n$-photon states \eqref{eq:aug16_5b} is more often to see in practice than that in (\ref{eq:n photons}). For example, by applying a coherent drive to a superconducting qubit embedded in a chiral waveguide, the qubit may generate photon states of the form \eqref{eq:aug16_5b}.  Therefore, in what follows we use \eqref{eq:aug16_5b} to describe $n$-photon states.

\subsection{The general form}\label{subsec:general form}
In this subsection, we present a computational procedure that can be used to compute the joint system-field state. Some other computational framework can be found in e.g., \cite{FTR+18,TFX+18}.

The basis state of the joint system-field state when the system is at level $\ket{j_s}$ and  channel $i$ has $k_i$ photons at time instants $t^i_{1\to k_i}$ is $\ket{j_s} \otimes |1_{t^1_{1\to k_1}},\ldots,1_{t^m_{1\to k_m}}\rangle$.
 Hence, the joint system-field state is of the general form
\begin{equation} \label{eq:Psi_june9}
\begin{aligned}
\ket{\Psi}
=&\sum_{j=0}^{K-1}\sum_{n=0}^\infty \sum_{k_1,\ldots,k_m}^n \int_{-\infty}^\infty \int_{-\infty}^{t^1_{2\to k_1}} \cdots \int_{-\infty}^\infty \int_{-\infty}^{t^m_{2\to k_m}}
\\
&\hspace{4ex}\xi^{j_s}(t^1_{1\to k_1},\ldots,t^m_{1\to k_m})\ket{j_s} \\
&\hspace{4ex} dB_{\mathrm{in},1}^\ast(t^1_{1\to k_1})\cdots dB_{\mathrm{in},m}^\ast(t^m_{1\to k_m})\ket{\Phi_0},
\end{aligned}
\end{equation}
where we write informally $d B_{\mathrm{in}, k}^\ast(t)=b_{\mathrm{in}, k}^\ast(t)dt$ and $dB_{\mathrm{in},i}^\ast(t^i_{1\to k_i})$ as the product $dB^\ast_{\mathrm{in},i}(t^i_1) \cdots dB^\ast_{\mathrm{in},i}(t^i_{k_i}) $.  The normalization condition $\braket{\Psi|\Psi}=1$ gives
\begin{equation*}\begin{aligned}
&\sum_{j=0}^{K-1}\sum_{n=0}^\infty \sum_{k_1,\ldots,k_m}^n
\int_{-\infty}^\infty \int_{-\infty}^{t^1_{2\to k_1}} \cdots \int_{-\infty}^\infty \int_{-\infty}^{t^m_{2\to k_m}}
\\
&
\hspace{4ex} |\xi^{j_s}(t^1_{1\to k_1},\ldots,t^m_{1\to k_m})|^2  dt^1_{1\to k_1}\cdots dt^m_{1\to k_m}
=1.
\end{aligned}
\end{equation*}
Denote
\begin{equation*}
\ket{\eta(t^1_{1\to k_1}, \ldots, t^m_{1\to k_m}) }
=\sum_{j=0}^{K-1} \xi^{j_s}(t^1_{1\to k_1}, \ldots, t^m_{1\to k_m}) \ket{j_s}.
\end{equation*}
It is worthwhile to note that $\ket{\eta(t^1_{1\to k_1}, \ldots, t^m_{1\to k_m}) } $ in general is not normalized.
The state $\ket{\Psi}$ in  \eqref{eq:Psi_june9} can be re-written as
\begin{equation*}
\begin{aligned}
\ket{\Psi}
=&\sum_{n=0}^\infty \sum_{k_1,\ldots,k_m}^n \int_{-\infty}^\infty \int_{-\infty}^{t^1_{2\to k_1}} \cdots \int_{-\infty}^\infty \int_{-\infty}^{t^m_{2\to k_m}} \\
&\hspace{4ex}\ket{\eta(t^1_{1\to k_1},\ldots,t^m_{1\to k_m})} \\
&\hspace{4ex}dB_{\mathrm{in},1}^\ast(t^1_{1\to k_1})\cdots dB_{\mathrm{in},m}^\ast(t^m_{1\to k_m})\ket{\Phi_0}.
\end{aligned}
\end{equation*}

If the input field is initially in the vacuum state $\ket{\Phi_0}$, then the initial joint system-field state is $\ket{\Psi(t_0)}=\sum_{j=0}^{K-1} \xi^{j_s} \ket{j_s}\otimes \ket{\Phi_0}  \equiv  \ket{\eta_{0}} \otimes \ket{\Phi_0}$,
where $\sum_{j=0}^{K-1} |\xi^{j_s}|^2 =1$. Because the  photon generation time is from the initial time $t_0=0$ to the present time $t>0$,  the joint system-field state at time $t$ is
\begin{equation}\label{Psi_3}\begin{aligned}
\ket{\Psi(t)}
=&\sum_{n=0}^\infty \sum_{k_1, \ldots, k_m}^n \int_{0}^t  \int_{0}^{t^1_{2\to k_1}} \cdots \int_{0}^t  \int_{0}^{t^m_{2\to k_m}} \\ &\ket{\eta_t(t^1_{1\to k_1}, \ldots, t^m_{1\to k_m}) } \\
&dB_{\mathrm{in},1}^\ast(t^1_{1\to k_1})\cdots dB_{\mathrm{in},m}^\ast(t^m_{1\to k_m})\ket{\Phi_0},
\end{aligned}\end{equation}
where
\begin{equation} \label{eq:jun9_temp1}
\ket{\eta_t(t^1_{1\to k_1}, \ldots, t^m_{1\to k_m}) }
=\sum_{j=0}^{K-1} \xi^{j_s}_t(t^1_{1\to k_1}, \ldots, t^m_{1\to k_m}) \ket{j_s}.
\end{equation}
(Here, the subscript ``$t$'' indicates that the coefficients are time dependent.)
In particular, the  term in $\ket{\Psi(t)}$ that corresponds to $n=0$ is  $\ket{\eta_t}\ket{\Phi_0} =  \sum_{j=0}^{K-1} \xi^{j_s}_t  \ket{j_s}\ket{\Phi_0}$.
In other words, the field is in the vacuum state and the system is in the superposition state $  \sum_{j=0}^{K-1} \xi^{j_s}_t  \ket{j_s}$.

In what follows, we aim to derive formulas for computing the joint system-field state $\ket{\Psi(t)}$.  Differentiating both sides of \eqref{Psi_3} yields
\begin{equation}\label{sep28_d_Psi}\begin{aligned}
&d\ket{\Psi(t)} \\
=&\sum_{n=1}^\infty \sum_{k_1, \ldots, k_m}^n dB_{\mathrm{in},1}^\ast (t) \int_{0}^t \int_{0}^{t^1_{2\to k_1-1}} \cdots \int_{0}^t \int_{0}^{t^m_{2\to k_m}} \\
&\hspace{12ex}\ket{\eta_t(t^1_{1\to k_1-1},t, t^m_{2\to k_2},\ldots, t^m_{1\to k_m}) } \\
&\hspace{12ex}dB_{\mathrm{in},1}^\ast(t^1_{1\to k_1-1})\cdots dB_{\mathrm{in},m}^\ast(t^m_{1\to k_m})\ket{\Phi_0} \\
&\vdots \\
&+\sum_{n=1}^\infty \sum_{k_1, \ldots, k_m}^n dB_{\mathrm{in},m}^\ast (t) \int_{0}^t \int_{0}^{t^1_{2\to k_1}} \cdots \int_{0}^t \int_{0}^{t^m_{2\to k_m-1}} \\
&\hspace{14ex}\ket{\eta_t(t^1_{1\to k_1}, \ldots, t^{m-1}_{1\to k_{m-1}}, t^m_{1\to k_m-1},t)} \\
&\hspace{14ex}dB_{\mathrm{in},1}^\ast(t^1_{1\to k_1})\cdots dB_{\mathrm{in},m}^\ast(t^m_{1\to k_m-1})\ket{\Phi_0} \\
&+dt\sum_{n=0}^\infty \sum_{k_1, \ldots, k_m}^n \int_{-\infty}^t \int_{-\infty}^{t^1_{2\to k_1}} \cdots \int_{-\infty}^t \int_{-\infty}^{t^m_{2\to k_m}} \\
&\hspace{14ex}\ket{\dot{\eta}_t(t^1_{1\to k_1}, \ldots, t^m_{1\to k_m})} \\
&\hspace{14ex}dB_{\mathrm{in},1}^\ast(t^1_{1\to k_1})\cdots dB_{\mathrm{in},m}^\ast(t^m_{1\to k_m})\ket{\Phi_0},
\end{aligned}\end{equation}
where $\dot{\eta}_t$ means the derivative of $\eta$ with respect to $t$. Comparing the right-hand sides of \eqref{sep28_d_Psi} and \eqref{SME_b} (for $\ket{\Psi(t)}$ in  \eqref{Psi_3}), we get
\vspace{-2.5ex}
\begin{equation}\label{dot_eta}\begin{aligned}
&\ket{\dot{\eta}_t(t^1_{1\to k_1},\ldots, t^m_{1\to k_m})} \\
=&-\mathrm{i}H_\mathrm{eff}(t)\ket{\eta_t(t^1_{1\to k_1},\ldots, t^m_{1\to k_m})},
\end{aligned}\end{equation}
and
\vspace{-2.5ex}
\begin{equation}\label{+1}\begin{aligned}
&\ket{\eta_t(t^1_{1\to k_1},\ldots,t^i_{1\to
k_i},t,\ldots,t^m_{1\to k_m})} \\
=&L_i(t)\ket{\eta_t(t^1_{1\to k_1},\ldots,t^i_{1\to k_i},\ldots,t^m_{1\to k_m})}
\end{aligned}\end{equation}
for all $i=1,\ldots,m$. Under the basis \eqref{basis}, the effective Hamiltonian $H_\mathrm{eff}$ of the system has a matrix representation. Hence, similar to what has been done in \cite{LZW20}, define the propagator $V(t)$ (a matrix function) that solves a system of deterministic homogeneous ODEs $\dot{V}(t)=-\mathrm{i}H_\mathrm{eff}(t)V(t)$
under the initial condition $V(0)=I$. Then one can define the transition matrix
\begin{equation}\label{jan27-2}
G(t,\tau)\triangleq V(t)V(\tau)^{-1}, ~ t,\tau\geq 0.
\end{equation}
Clearly, for the $0$ photon state case, \eqref{dot_eta} and \eqref{jan27-2} yield
\begin{equation} \label{eq:jun9_initial}
\ket{\eta_t}= G(t,0)\ket{\eta_{0}}, ~ t\geq 0.
\end{equation}
As is well-known in linear systems theory, see e.g. \cite[Chapter 4]{Rugh96},  iteratively using the transition matrix $G(t,\tau)$ in  \eqref{jan27-2} and  \eqref{+1} we have
\begin{equation*}\begin{aligned}
&\ket{\eta_t(t^1_{1\to k_1}, t^2_{1\to k_2}, \ldots, t^m_{1\to k_m})} \\
=&\ G(t,t^1_{k_1})\ket{\eta_{t^1_{k_1}}(t^1_{1\to k_1-1}, t^1_{k_1}, t^2_{1\to k_2}, \ldots, t^m_{1\to k_m})} \\
=&\ G(t,t^1_{k_1})L_1(t^1_{k_1})\ket{\eta_{t^1_{k_1}}(t^1_{1\to k_1-1}, t^2_{1\to k_2}, \ldots, t^m_{1\to k_m})} \\
=&\ G(t,t^1_{k_1})L_1(t^1_{k_1})G(t^1_{k_1},t^1_{k_1-1}) \\
&\times\ket{\eta_{t^1_{k_1-1}}(t^1_{1\to k_1-1}, t^2_{1\to k_2}, \ldots, t^m_{1\to k_m})} \\
&\vdots \\
=&\ G(t,t^1_{k_1})L_1(t^1_{k_1})\cdots G(t^1_2,t^1_1)L_1(t^1_1) \\
&\times G(t^1_1,t^2_{k_2})\ket{\eta_{t^2_{k_2}}(t^2_{1\to k_2}, \ldots, t^m_{1\to k_m})} \\
&\vdots \\
=&\ G(t,t^1_{k_1})L_1(t^1_{k_1})\cdots G(t^1_2,t^1_1)L_1(t^1_1)G(t^1_1,t^2_{k_2})L_2(t^2_{k_2}) \times \cdots \\
&\times G(t^2_2,t^2_1)L_2(t^2_1) G(t^2_1,t^3_{k_3}) \ket{\eta_{t^3_{k_3}}(t^3_{1\to k_3}, \ldots, t^m_{1\to k_m})} \\
&\vdots \\
=&\ G(t,t^1_{k_1})L_1(t^1_{k_1})\cdots G(t^1_2,t^1_1)L_1(t^1_1) \\
&\times G(t^1_1,t^2_{k_2})L_2(t^2_{k_2})\cdots G(t^2_2,t^2_1)L_2(t^2_1) \times \cdots \\
&\times G(t^{m-1}_1,t^m_{k_m})L_m(t^m_{k_m})\cdots G(t^m_{2},t^m_1)L_m(t^m_1) G(t^m_1,0)\ket{\eta_{0}}.
\end{aligned}\end{equation*}
Consequently, we have the following general formula
\begin{equation}\label{general}\begin{aligned}
&\ket{\eta_t(t^1_{1\to k_1}, t^2_{1\to k_2}, \ldots, t^m_{1\to k_m})} \\
=&G(t,t^1_{k_1})L_1(t^1_{k_1})\cdots G(t^1_2,t^1_1)L_1(t^1_1) \\
&\times G(t^1_1,t^2_{k_2})L_2(t^2_{k_2})\cdots G(t^2_2,t^2_1)L_2(t^2_1)\cdots \\
&\times G(t^{m-1}_1,t^m_{k_m})L_m(t^m_{k_m})\cdots G(t^m_{2},t^m_1)L_m(t^m_1) \\
&\times G(t^m_1,0)\ket{\eta_{0}}.
\end{aligned}\end{equation}

\begin{remark}
To better understand the recursive algorithm given in \eqref{general}, we take the two-channel case as an example, i.e., $m=2$. In this case, the first channel contains $k_1$ photons and the second channel contains $k_2$ photons. The recursive functions \eqref{dot_eta} and \eqref{+1} are reduced to be
\begin{equation}\label{oct22-8}\begin{aligned}
\ket{\dot{\eta}_t^{k_1,k_2}(t^1_{1\to k_1},t^2_{1\to k_2})}=-\mathrm{i}H_\mathrm{eff}(t)\ket{\eta_t^{k_1,k_2}(t^1_{1\to k_1}, t^2_{1\to k_2})},
\end{aligned}\end{equation}
and
\begin{equation}\label{oct22-9}\begin{aligned}
&\ket{\eta_t^{k_1+1,k_2}(t^1_{1\to k_1},t,t^2_{1\to k_2})}=L_1(t)\ket{\eta_t^{k_1,k_2}(t^1_{1\to k_1},t^2_{1\to k_2})}, \\
&\ket{\eta_t^{k_1,k_2+1}(t^1_{1\to k_1},t^2_{1\to k_2},t)}=L_2(t)\ket{\eta_t^{k_1,k_2}(t^1_{1\to k_1},t^2_{1\to k_2})}.
\end{aligned}\end{equation}
According to \eqref{general}, we have
\begin{equation}\label{oct22-10}\begin{aligned}
&\ket{\eta_t^{k_1,k_2}(t^1_{1\to k_1},t^2_{1\to k_2})} \\
=&G(t,t^1_{k_1})L_1(t^1_{k_1})\cdots G(t^1_2,t^1_1)L_1(t^1_1) \\
&\times G(t^1_1,t^2_{k_2})L_2(t^2_{k_2})\cdots G(t^2_2,t^2_1)L_2(t^2_1) \\
&\times G(t^2_1,0)\ket{\eta_{0}},
\end{aligned}\end{equation}
which yields
\begin{equation}\label{oct22-11}\begin{aligned}
\ket{\eta_t}=&G(t,0)\ket{\eta_{0}}, \\
\ket{\eta_t^{1,0}(t^1_1)}=&G(t,t_1^1)L_1(t^1_1)G(t_1^1,0)\ket{\eta_{0}}, \\
\ket{\eta_t^{0,1}(t^1_2)}=&G(t,t^2_1)L_2(t^2_1)G(t^2_1,0)\ket{\eta_{0}}, \\
\ket{\eta_t^{1,1}(t^1_1,t^2_1)}=&G(t,t^1_1)L_1(t^1_1)G(t^1_1,t^2_1)L_2(t^2_1)G(t^2_1,0)\ket{\eta_{0}} \\
=&G(t,t^2_1)L_2(t^2_1)G(t^2_1,t^1_1)L_1(t^1_1)G(t^1_1,0)\ket{\eta_{0}}, \\
&\vdots
\end{aligned}\end{equation}
That is, there are multiple sequential orders to generate a ($k_1+k_2$)-photon state distributed in two channels ($k_1,k_2\geq1$), which allow the two channels being created crosswise.
\end{remark}

The following recursive relation turns out useful for deriving all the states.

\begin{equation}\label{oct23_+1}\begin{aligned}
&\ket{\eta_t(t^1_{1\to k_1},\ldots,t^i_{1\to k_i+1},\ldots,t^m_{1\to k_m})} \\
=&\  G(t,t^i_{k_i+1})L_i(t^i_{k_i+1})G(t,t^i_{k_i+1})^{-1} \\
&\ket{\eta_t(t^1_{1\to k_1},\ldots,t^i_{1\to k_i},\ldots,t^m_{1\to k_m})}.
\end{aligned}\end{equation}

%
%
%

\begin{remark}\label{rem:for thm 5.1}
Theorem \ref{state} in Section \ref{sec:super} can be proved by applying the recursive relation \eqref{oct23_+1} to the special case when $m=1$ and the initial system-field state is
  \begin{equation}\label{oct21-2}
\ket{\Psi(0)}=\ket{\Psi_k(0)} =|\eta_{0}\rangle\otimes\ket{\Phi_0},
\end{equation}
 where $|\eta_{0}\rangle=|g_1\cdots e_k\cdots g_N0\rangle$.
In what follows we sketch the proof. Because there is only one input channel and the number of excitation is 1, i.e., $m=1$ and $k_1=1$, the joint system-field state $\ket{\Psi(t)}$ in \eqref{Psi_3} is
\begin{equation}\label{oct21-1}
\ket{\Psi(t)}=|\eta_t\rangle\otimes\ket{\Phi_0}+\int_{0}^{t}|\eta_t(t_1)\rangle dB_{\mathrm{in}}^\ast(t_1)\otimes\ket{\Phi_0}.
\end{equation}
 The first term in \eqref{oct21-1} indicates that the single excitation exists in the Tavis-Cummings model. $|\eta_t\rangle$ can be described by
\begin{equation*}\label{oct21-6}
|\eta_t\rangle=\sum_{j=1}^{N}c_j(t)|g_1\cdots e_j\cdots g_N0\rangle+c_{N+2,k}(t)|g_1g_2\cdots g_N1\rangle
\end{equation*}
with the initial conditions $c_k(0)=1$ and  $c_j(0)=c_{N+2,k}(0)=0$ for $1\leq j\leq N, ~ j\neq k$.
On the other hand, the second term in \eqref{oct21-1} means that the Tavis-Cummings model emits a photon into the output field, which can be rewritten as
\begin{equation*}\label{oct21-11}\begin{aligned}
&\int_{0}^{t}|\eta_t(t_1)\rangle dB_{\mathrm{in}}^\ast(t_1)\otimes\ket{\Phi_0} \\
=&c_{N+1}(t)\int_{0}^{t}\varphi(t_1)dB_{\mathrm{in}}^\ast(t_1)|g_1g_2\cdots g_N0\rangle\otimes\ket{\Phi_0}.
\end{aligned}\end{equation*}
Here, $\varphi(t_1)$ is the pulse shape of the single-photon output state. When $t\rightarrow\infty$, $c_{N+1}(\infty)$ denotes the probability amplitude corresponding to the reduced joint state $|g_1g_2\cdots g_N0\rangle\otimes|\Phi_1\rangle$. By the recursive relation \eqref{oct23_+1},  we have $\ket{\eta_t(t_1)} = G(t,t_1) L_1(t_1) G(t,t_1)^{-1}  \ket{\eta_t}$,
where  $\ket{\eta_t}$ can be calculated via   \eqref{eq:jun9_initial}  and \eqref{oct21-2}. Theorem \ref{state} follows.
\end{remark}

In the following subsections, we apply the above theory to our  Tavis-Cummings model. In this case, $m=1$, $L_i(t)$ in \eqref{+1} is $L=\sqrt{\kappa}a$, and  $H_\mathrm{eff}(t)$ in \eqref{dot_eta} is $H_{\mathrm{TC}} -\ff{\mathrm{i}}{2}L^\ast L$, where $H_{\mathrm{TC}}$ is given in \eqref{H_TC}. For simplicity, we set $\omega_1=\ldots=\omega_N\equiv\omega_r=0$ and $\Gamma_1=\ldots =\Gamma_N=\kappa=1$. Assume that the Tavis-Cummings model contains $R$ initially excited two-level atoms, the cavity is initially empty, and the input is in the vacuum state. In such setting, the number of basis states of the  Tavis-Cummings model is $K= 2^N (R+1)$ as discussed in subsection \ref{subsec:model}. Finally, for better illustration we plot the symmetric pulse shape in  \eqref{eq:n photons}, instead of that in \eqref{eq:aug16_5b}.

\subsection{$N=3$ and $R=1$.} \label{subsec:sim1}

In this case, the total number of basis states is $K=16$. The joint system-field state can be expressed as $\ket{\Psi(t)}=|\eta_t\rangle\otimes\ket{\Phi_0}+\int_{0}^{t}|\eta_t(t_1)\rangle dB_{\mathrm{in}}^\ast(t_1)\ket{\Phi_0}$,
where by  \eqref{eq:jun9_temp1}, $|\eta_t(t_1,\ldots,t_k)\rangle=\sum_{j=0}^{15}\xi^{j_s}_t(t_1,\ldots,t_k)\ket{j_s}, ~ k=0,1$.
%
%
Let the initial joint system-field state be $\ket{\Psi(0)} = |e_1g_2g_30\rangle\otimes\ket{\Phi_0}$.
By the recursive relation \eqref{oct23_+1}, we have
%
\begin{equation*}\begin{aligned}
&\xi^{1_s}_t=\frac{2}{\sqrt{47}}\left(e^{\frac{-1-\sqrt{47}\mathrm{i}}{4}t}-e^{\frac{-1+\sqrt{47}\mathrm{i}}{4}t}\right), \\
&\xi^{2_s}_t=-\frac{1}{3}+\frac{2\left(\frac{\sqrt{47}+\mathrm{i}}{4}e^{\frac{-1-\sqrt{47}\mathrm{i}}{4}t}+\frac{\sqrt{47}-\mathrm{i}}{4}e^{\frac{-1+\sqrt{47}\mathrm{i}}{4}t}\right)}{3\sqrt{47}}, \\
&\xi^{4_s}_t=\xi^{2_s}_t, \\
&\xi^{8_s}_t=\frac{2}{3}+\frac{2\left(\frac{\sqrt{47}+\mathrm{i}}{4}e^{\frac{-1-\sqrt{47}\mathrm{i}}{4}t}+\frac{\sqrt{47}-\mathrm{i}}{4}e^{\frac{-1+\sqrt{47}\mathrm{i}}{4}t}\right)}{3\sqrt{47}},
\\
&\xi^{0_s}_t(t_1)=-\frac{4\mathrm{i}}{\sqrt{47}}e^{-\frac{1}{4}t_1}\sin\left(\frac{\sqrt{47}}{4}t_1\right), ~ 0\leq t_1\leq t,
\end{aligned}\end{equation*}
and all the others are 0. In the steady state ($t=\infty$), we have $\xi^{2_s}_\infty=\xi^{4_s}_\infty=-\frac{1}{3}, \xi^{8_s}_\infty=\frac{2}{3}$,
which correspond to the basis vectors $|g_1g_2e_30\rangle,  |g_1e_2g_30\rangle,  |e_1g_2g_30\rangle$
respectively.
Moreover, as $\lim_{t\to\infty}\int_0^{\infty} |\xi^{0_s}_t(t_1)|^2 dt_1=\frac{1}{3}$,
the steady-state joint system-field state is
\begin{equation}\label{oct8a}\begin{aligned}
&|\Psi(\infty)\rangle=\frac{2}{3}|e_1g_2g_30\rangle\otimes\ket{\Phi_0}-\frac{1}{3}|g_1e_2g_30\rangle\otimes\ket{\Phi_0}
\\
 &\hspace{8ex} -\frac{1}{3}|g_1g_2e_30\rangle\otimes\ket{\Phi_0}+ \frac{1}{\sqrt{3}}|g_1g_2g_30\rangle\otimes |1_{\eta}\rangle,
\end{aligned}\end{equation}
where the pulse shape of the single photon is $\eta(t)=\sqrt{3} \xi^{0_s}_\infty(t)$ up to a global phase.  Finally, the square of the probability amplitude $2/3$ in  \eqref{oct8a} is consistent with the limiting value of the red solid curve in Fig. \ref{two_excited_new}.


\subsection{$N=2$ and $R=2$.} \label{subsec:sim4}

In this case, the total number of basis states is $K=12$.  The joint system-field state can be expressed as
\begin{equation*}\begin{aligned}
\ket{\Psi(t)}=&|\eta_t\rangle\otimes\ket{\Phi_0}+\int_{0}^{t}|\eta_t(t_1)\rangle dB_{\mathrm{in}}^\ast(t_1)\ket{\Phi_0}  \\
&+\int_{0}^{t}\int_{0}^{t_2}|\eta_t(t_1,t_2)\rangle dB_{\mathrm{in}}^\ast(t_1)dB_{\mathrm{in}}^\ast(t_2)\ket{\Phi_0},
\end{aligned}\end{equation*}
where $|\eta_t(t_1,\ldots,t_k)\rangle=\sum_{j=0}^{11}\xi^{j_s}_t(t_1,\ldots,t_k)\ket{j_s}, ~ k=0,1,2$.
Let the initial joint system-field state be $\ket{\Psi(0)} = |e_1e_20\rangle\otimes\ket{\Phi_0}$.
By Theorem \ref{thm:ss_photon},  the steady-state joint system-field state ($t\rightarrow\infty$) is
\begin{equation}\label{oct8c}
|\Psi(\infty)\rangle=|g_1g_20\rangle\otimes|2_{\eta}\rangle,
\end{equation}
where $|2_{\eta}\rangle$ is the 2-photon output state. By the recursive relation \eqref{oct23_+1}, the pulse shape of the 2-photon state $|2_{\eta}\rangle$ is
\begin{equation*}\begin{aligned}
\eta(t_1,t_2) =& \xi^{0_s}_\infty (t_1,t_2)\\
=&(0.0228787-0.00785441\mathrm{i})
e^{\mu_1t_1-\mu_4t_2} \\
+&(0.319599+0.0291387\mathrm{i})
e^{\mu_2 t_1-\mu_4t_2} \\
-&(0.342477+0.0212843\mathrm{i})
e^{\mu_3t_1-\mu_4t_2}+\mathrm{c.c.},
\end{aligned}\end{equation*}
where ``$\mathrm{c.c.}$'' means complex conjugate, and $\mu_1=-0.336506+3.79453\mathrm{i}$, $\mu_2=-0.336506-1.01065\mathrm{i}$,  $\mu_3=-0.076987+1.39194\mathrm{i}$, and $\mu_4=0.25+1.39194\mathrm{i}$.
It can be easily verified that $
\int_{0}^{\infty}\int_{0}^{t_2}|\eta(t_1,t_2)|^2dt_1dt_2=1$.
The probability distribution of the steady-state output two-photon state $|2_{\eta}\rangle$ in   \eqref{oct8c} is simulated in Fig. \ref{oct8_2photons2}.

\begin{figure}[htp!]
\centering
\includegraphics[scale=0.6]{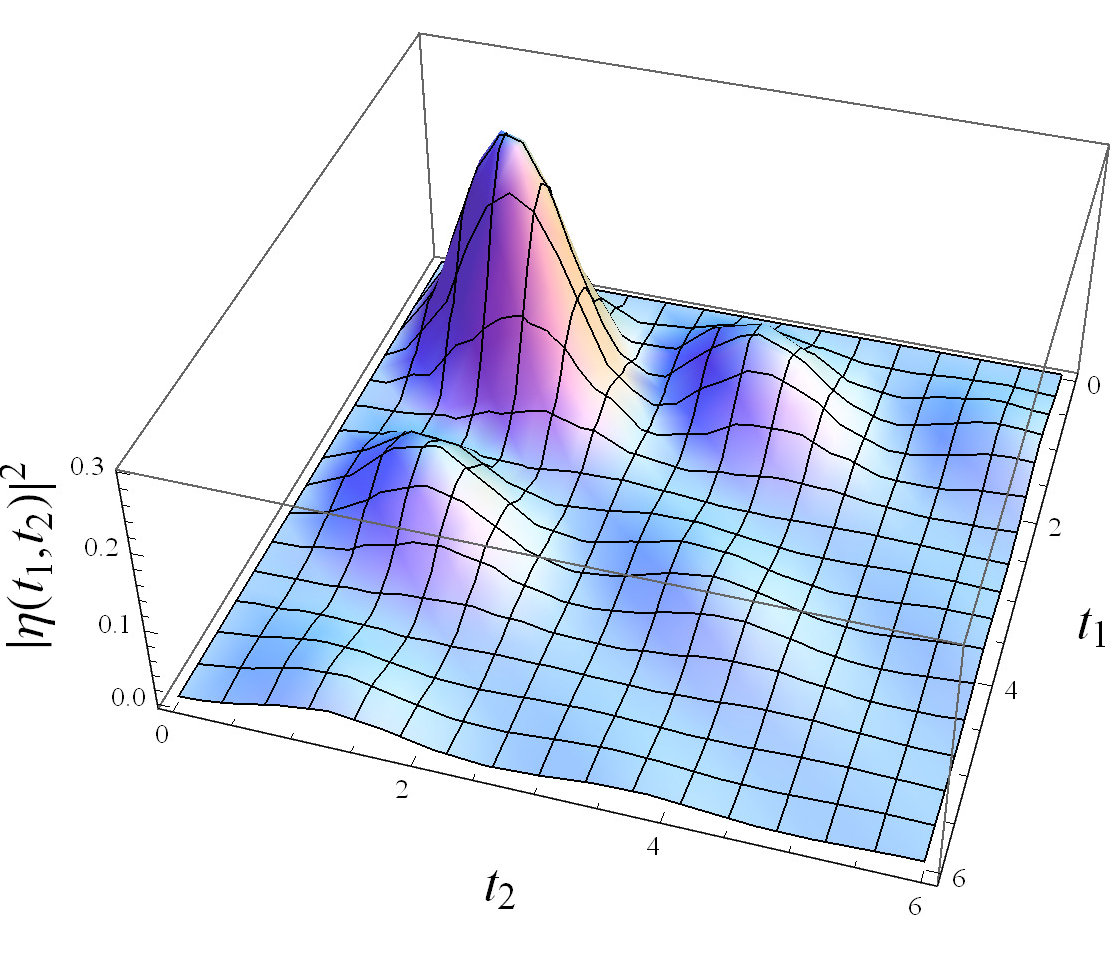}
\caption{Probability distribution of  $|2_{\eta}\rangle$ in  \eqref{oct8c}.}
\label{oct8_2photons2}
\end{figure}

\subsection{$N=3$ and $R=2$.} \label{subsec:sim2}

In this case, the total number of basis states is $K=24$.  The joint system-field state can be expressed as
\begin{equation*}\begin{aligned}
\ket{\Psi(t)}=&|\eta_t\rangle\otimes\ket{\Phi_0}+\int_{0}^{t}|\eta_t(t_1)\rangle dB_{\mathrm{in}}^\ast(t_1)\ket{\Phi_0} \\
&+\int_{0}^{t}\int_{0}^{t_2}|\eta_t(t_1,t_2)\rangle dB_{\mathrm{in}}^\ast(t_1)dB_{\mathrm{in}}^\ast(t_2)\ket{\Phi_0},
\end{aligned}\end{equation*}
where $|\eta_t(t_1,\ldots,t_k)\rangle=\sum_{j=0}^{23}\xi^{j_s}_t(t_1,\ldots,t_k)\ket{j_s}, ~ k=0,1,2$.
Let  the initial joint system-field state be $\ket{\Psi(0)} = |g_1e_2e_30\rangle\otimes\ket{\Phi_0}$.
According to the recursive relation \eqref{oct23_+1}, we have the steady-state joint system-field state
\begin{equation}\label{oct8b}\begin{aligned}
|\Psi(\infty)\rangle
=&\frac{2}{3}|e_1g_2g_30\rangle\otimes|1_{\eta^{{12}_s}}\rangle +\frac{1}{3}|g_1e_2g_30\rangle\otimes|1_{\eta^{{6}_s}}\rangle \\
&+\frac{1}{3}|g_1g_2e_30\rangle\otimes|1_{\eta^{{3}_s}}\rangle +\frac{\sqrt{3}}{3}|g_1g_2g_30\rangle\otimes|2_{\eta^{{0}_s}}\rangle,
\end{aligned}\end{equation}
where $\eta^{{12}_s}(t)= \ff{3}{2}\xi^{12_s}_\infty(t)= 0.5163975\left(e^{\lambda^\prime t}-e^{\lambda^{\prime \ast}t}\right)$ with $\lambda^\prime=-0.25+0.968246\mathrm{i}$, $\eta^{{6}_s}(t)=\eta^{{3}_s}(t)=-\eta^{{12}_s}(t)$,
and
\vspace{-1ex}
\begin{equation*}\begin{aligned}
\eta^{{0}_s}(t_1,t_2) =& \sqrt{3}\xi^{{0}_s}_\infty(t_1,t_2) \\
=&(0.338571+0.0140041\mathrm{i})e^{\lambda_1^\prime  t_1-\lambda_4^\prime t_2} \\
&+(0.0155773-0.00341903\mathrm{i})e^{\lambda_2^\prime t_1-\lambda_4^\prime t_2} \\
&-(0.354149+0.0105851\mathrm{i})e^{\lambda_3^\prime t_1-\lambda_4^\prime t_2}+\mathrm{c.c.}
\end{aligned}
\vspace{-.8ex}
\end{equation*}
with $\lambda_1^\prime =-0.301227-1.40985\mathrm{i}$, $\lambda_2^\prime =-0.301227+4.83767\mathrm{i}$,
$\lambda_3^\prime =-0.147546+1.71391\mathrm{i}$, and $\lambda_4^\prime =0.25+1.71391\mathrm{i}$.
%
%
%
The square of the probability amplitude $\frac{2}{3}$ in \eqref{oct8b} is consistent with the limiting value of the red dashed curve in Fig. \ref{two_excited_new}. It is clear in Fig. \ref{oct8_3singles} that  $\ket{1_{\eta}}$ has more oscillations than  $\ket{1_{\eta^{{12}_s}}}$.

\begin{figure}[htp!]
\centering
\includegraphics[scale=0.4]{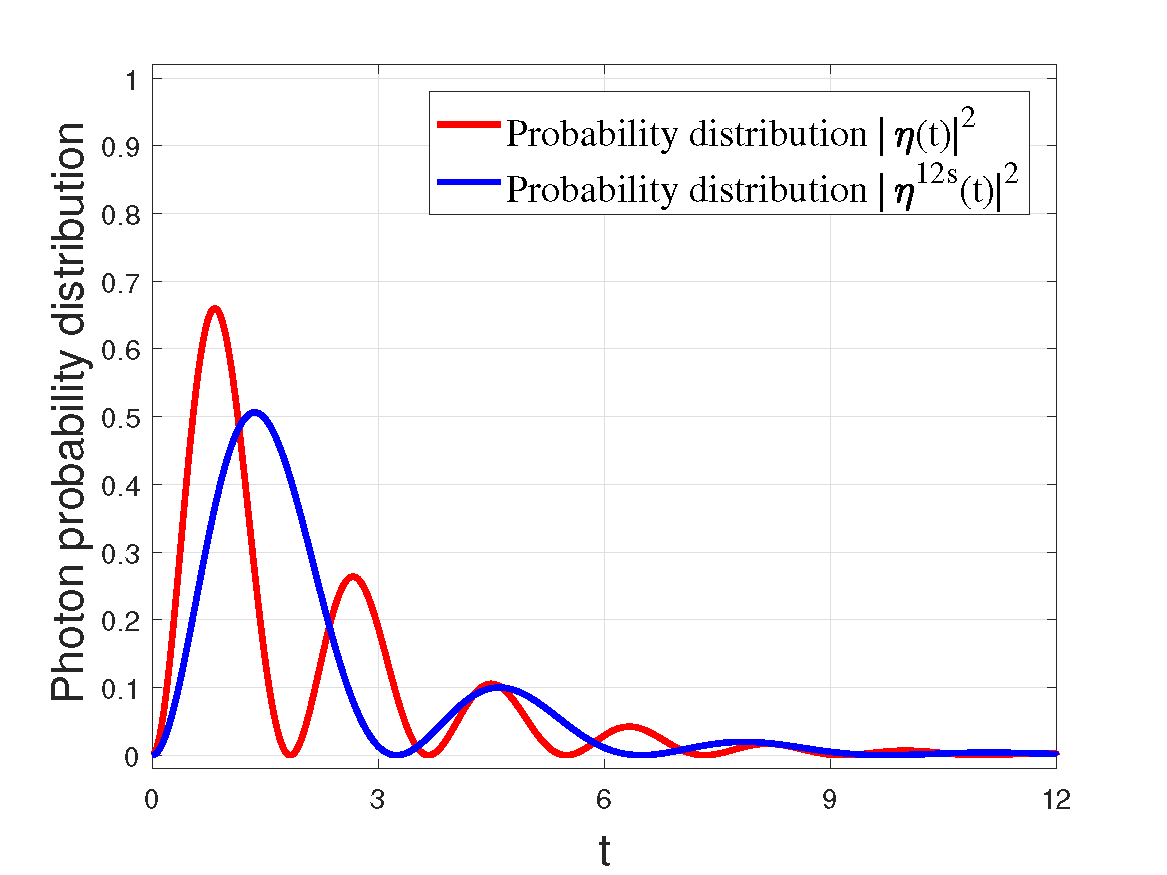}
\caption{The comparison between probability distributions of the steady-state single-photon  output states $\ket{1_{\eta}}$ in  \eqref{oct8a}  and $\ket{1_{\eta^{{12}_s}}}$ in  \eqref{oct8b}.}
\label{oct8_3singles}
\end{figure}

\begin{figure}[htp!]
\centering
\includegraphics[scale=0.6]{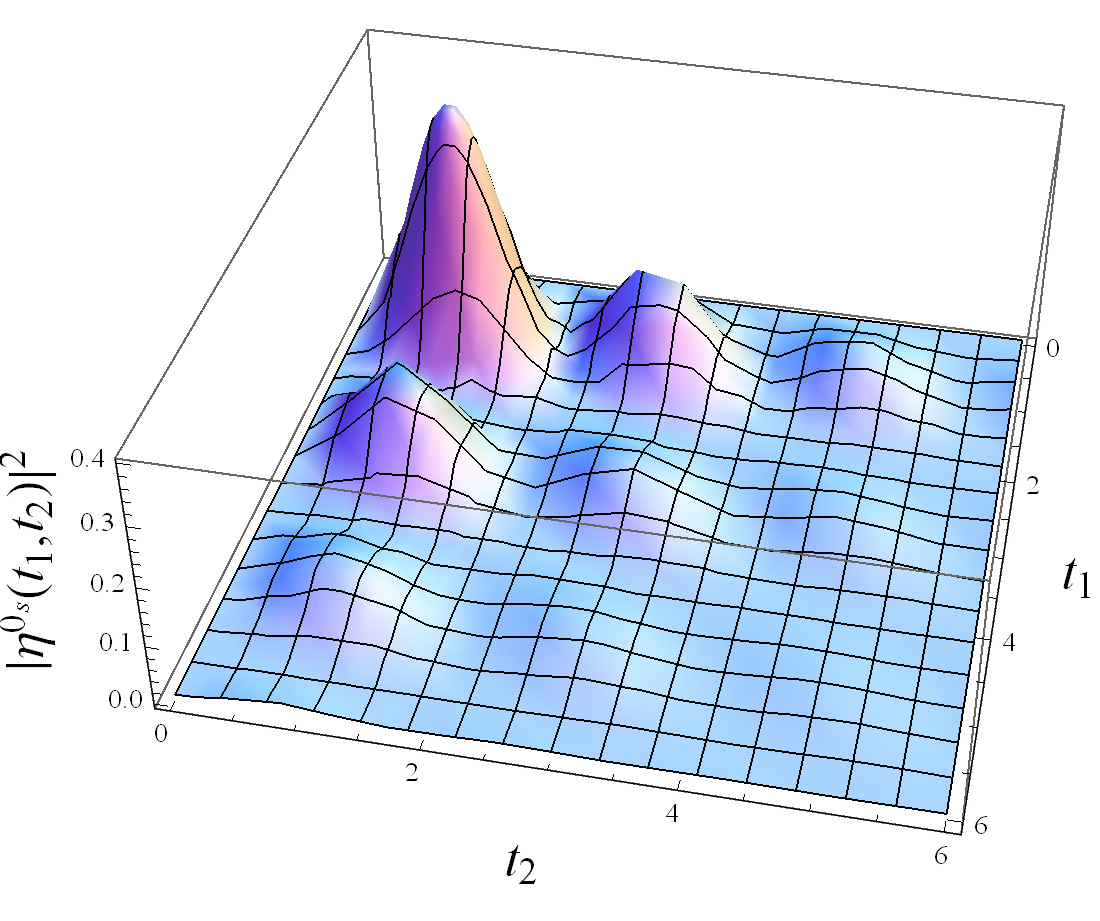}
\caption{Probability distribution of the two-photon  output state $|2_{\eta^{{0}_s}}\rangle$ in  \eqref{oct8b}.}
\label{oct8_2photons1}
\end{figure}

Both Figs. \ref{oct8_2photons2} and \ref{oct8_2photons1} show the probability distributions of a two-photon output state. The one generated by the Tavis-Cummings system with $3$ atoms (shown in Fig. \ref{oct8_2photons1}) oscillates more rapidly than that with $2$ atoms (see Fig. \ref{oct8_2photons2}). This result indicates that the Rabi oscillation is enhanced by adding more atoms, which is consistent with the discussions in Fig. \ref{distri1} and the observations in \cite[Fig. 5]{CS14}.

\subsection{$N=3$ and $R=3$.} \label{subsec:sim3}

In this case, the total number of basis states is $K=32$ and the joint system-field state can be expressed as
\begin{equation*}\begin{aligned}
&\ket{\Psi(t)}=|\eta_t\rangle\otimes\ket{\Phi_0}+\int_{0}^{t}|\eta_t(t_1)\rangle dB_{\mathrm{in}}^\ast(t_1)\ket{\Phi_0} \\
+&\int_{0}^{t}\int_{0}^{t_2}|\eta_t(t_1,t_2)\rangle dB_{\mathrm{in}}^\ast(t_1)dB_{\mathrm{in}}^\ast(t_2)\ket{\Phi_0} \\
+&\int_{0}^{t}\int_{0}^{t_3}\int_{0}^{t_2}|\eta_t(t_1,t_2,t_3)\rangle dB_{\mathrm{in}}^\ast(t_1)dB_{\mathrm{in}}^\ast(t_2)dB_{\mathrm{in}}^\ast(t_3)\ket{\Phi_0},
\end{aligned}\end{equation*}
where $|\eta_t(t_1,\ldots,t_k)\rangle=\sum_{j=0}^{31}\xi^{j_s}_t(t_1,\ldots,t_k)\ket{j_s}, ~ k=0,1,2,3$.
Since the three atoms are all initially excited, the initial joint system-field state is $\ket{\Psi(0)}=|e_1e_2e_30\rangle\otimes\ket{\Phi_0}$.
By Theorem \ref{thm:ss_photon},  the steady-state joint system-field state  is
\begin{equation}\label{oct10a}
|\Psi(\infty)\rangle=|g_1g_2g_30\rangle\otimes|3_\eta\rangle,
\end{equation}
where $|3_\eta\rangle$ is the three-photon output state, whose probability distribution  is shown  in Fig. \ref{oct8_3photons}.

\begin{figure}[htp!]
\centering
\includegraphics[scale=0.3]{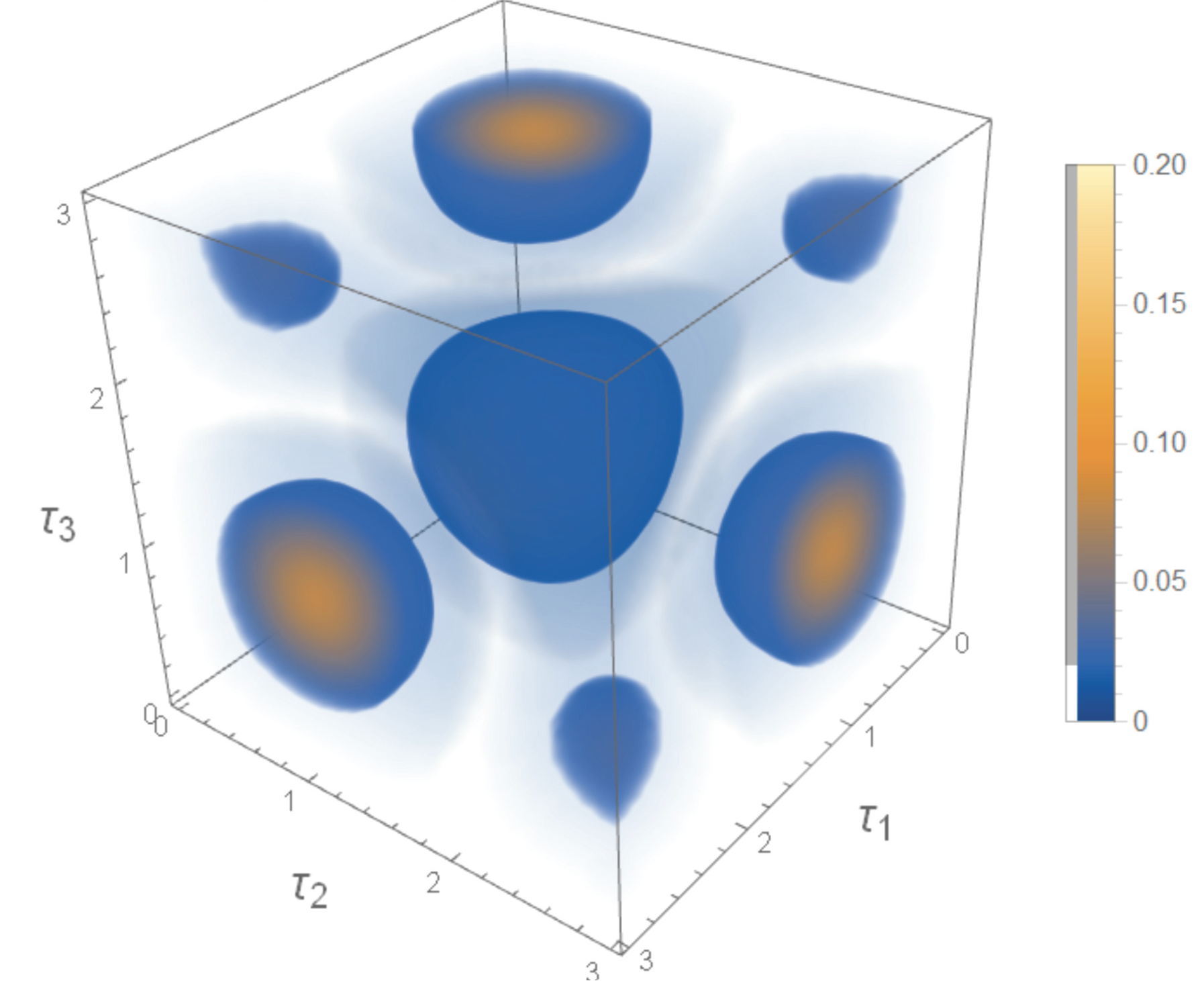}
\caption{Probability distribution of the three-photon  output state $|3_\eta\rangle$ in \eqref{oct10a}.}
\label{oct8_3photons}
\end{figure}

\section{Conclusion}\label{sec:conclusion}
In this paper, we have studied the Tavis-Cummings model. Specifically, we have applied quantum linear systems theory to reveal typical features of the Tavis-Cummings model, the analytical expression has been derived for the output single-photon state of a Tavis-Cummings system in response to a single-photon input. We have also proposed a computational framework  to derive the analytic form of the superposition state of  the system and field.
%
%
%
%
%
%
Note that the linear quantum systems' approach  is only applicable when there is only one excitation. Future studies are to be done on subradiance and superradiance in the multi-excitation case.

\section*{Appendix}\label{sec:appendix}
As concepts and properties of quantum linear passive systems are used in this paper, in particular Section \ref{sec:single_excitation}, in this appendix we collect some of them for readability of this paper.

For the quantum linear system \eqref{linear_system}, from \eqref{eq:prl-111} we know  the  transfer function $G[s]$ is an {\it all-pass} function \cite[pp. 357]{ZDG96}. Actually, as shown in Remark \ref{rem:passivity}, system \eqref{linear_system} is passive. If we take expectation on both sides of  \eqref{linear_system} with respect to the initial joint system-field state
(which is a unit vector in a Hilbert space), we get a {\it classical} linear system
%
%
%
%
\begin{equation}\label{linear_system_mean}\begin{aligned}
\left\langle\dot{\bar{a}}(t)\right\rangle&=A\left\langle\bar{a}(t)\right\rangle +B\left\langle b_{\mathrm{in}}(t)\right\rangle, \\
\left\langle b_{\mathrm{out}}(t)\right\rangle&=C\left\langle\bar{a}(t)\right\rangle dt+ \left\langle b_{\mathrm{in}}(t)\right\rangle.
\end{aligned}\end{equation}
Thus we can define controllability, observability, and Hurwitz stability for the quantum linear system \eqref{linear_system} using those for the classical linear system \eqref{linear_system_mean}.

\begin{definition}\label{def:stab_ctrb_obsv}
The quantum linear passive system \eqref{linear_system} is said to be \textit{Hurwitz stable} (resp. \textit{controllable}, \textit{observable}) if the corresponding classical linear system  \eqref{linear_system_mean} is \textit{Hurwitz stable} (resp. \textit{controllable}, \textit{observable}).
\end{definition}

The following result reveals the structure of quantum linear passive systems; see \cite{GZ15,ZGPG18} for more details.

\begin{proposition} \label{prop:3nov}
The quantum linear passive system \eqref{linear_system} has the following properties.
\begin{description}
\item[(i)] Its Hurwitz stability, controllability and observability are equivalent to each other.
\item[(ii)] It has only controllable and observable ($co$) subsystem and uncontrollable and unobservable ($\bar{c}\bar{o}$) subsystem. Moreover, each $\bar{c}\bar{o}$ subsystem is a closed (namely isolated) quantum system; see Eq. \eqref{system_linear} and  Corollary \ref{cor:kalman}.
\item[(iii)] Its poles corresponding to a $\bar{c}\bar{o}$ subsystem are all on the imaginary axis, while its poles corresponding to a $co$ subsystem are on the open left half of the complex plane.
\end{description}
\end{proposition}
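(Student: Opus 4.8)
The plan is to reduce everything to the two algebraic identities that \eqref{eq:prl-111} imposes on a quantum linear passive system, namely $A+A^\dagger+C^\dagger C=\mathbf{0}$ and $B=C^\dagger$, and then to run the Kalman decomposition with these constraints in hand. The single computational fact doing all the work is this: if $Ax=\lambda x$ with $x\neq\mathbf{0}$, then $(\lambda+\bar\lambda)\|x\|^2=x^\dagger(A+A^\dagger)x=-\|Cx\|^2$, so $2\,\mathrm{Re}(\lambda)\,\|x\|^2=-\|Cx\|^2\leq 0$; hence every eigenvalue of $A$ has non-positive real part, and $\mathrm{Re}(\lambda)=0$ holds exactly when $Cx=\mathbf{0}$. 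I will also use $A^\dagger=-A-C^\dagger C$ repeatedly to trade $A$ for $A^\dagger$.

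For (i), the PBH test identifies an unobservable mode with an eigenvector $x$ of $A$ satisfying $Cx=\mathbf{0}$, and an uncontrollable mode with an eigenvector $w$ of $A^\dagger$ satisfying $B^\dagger w=Cw=\mathbf{0}$. If $Ax=\lambda x$ and $Cx=\mathbf{0}$, then $A^\dagger x=-Ax-C^\dagger(Cx)=-\lambda x$, so $x$ is simultaneously an uncontrollable mode; the converse is symmetric, whence controllability $\Leftrightarrow$ observability. By the computational fact, $A$ is non-Hurwitz iff it has a purely imaginary eigenvalue iff it has an eigenvector annihilated by $C$, i.e. iff it is unobservable. Together these give (i).

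For (ii), I would show the reachable subspace $\mathcal{R}=\mathrm{Im}[\,B\ \ AB\ \ A^2B\ \cdots]$ and the unobservable subspace $\mathcal{N}=\bigcap_k\ker(CA^k)$ are orthogonal complements. Since $B^\dagger=C$ we have $\mathcal{R}^\perp=\bigcap_k\ker(C(A^\dagger)^k)$, and an induction with $A^\dagger=-A-C^\dagger C$ shows $(A^\dagger)^k x=(-1)^k A^k x$ for every $x\in\mathcal{N}$, so $\mathcal{R}^\perp=\mathcal{N}$. Both $\mathcal{R}$ and $\mathcal{N}$ are $A$-invariant and mutually orthogonal, so an orthonormal basis adapted to $\mathcal{R}\oplus\mathcal{R}^\perp$ block-diagonalizes $A$ and annihilates the $c\bar o$ and $\bar c o$ blocks; since $\mathcal{R}^\perp=\mathcal{N}\subseteq\ker C$, the $\bar c\bar o$ block has $\hat C_{\bar c\bar o}=\mathbf{0}$, hence also $\hat B_{\bar c\bar o}=\mathbf{0}$ (as $\hat B=\hat C^\dagger$ in this basis), and restricting $A+A^\dagger+C^\dagger C=\mathbf{0}$ to it yields $\hat A_{\bar c\bar o}+\hat A_{\bar c\bar o}^\dagger=\mathbf{0}$, i.e. $\hat A_{\bar c\bar o}=-\mathrm{i}\Omega'$ with $\Omega'$ Hermitian --- an uncoupled, purely Hamiltonian (hence closed) quantum system. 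This subspace step --- the identity $\mathcal{R}^\perp=\mathcal{N}$ together with the bookkeeping that the adapted basis be orthonormal so the restricted triples inherit the passivity identities verbatim --- is the part I expect to be the main obstacle.

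For (iii), the $\bar c\bar o$ poles are the eigenvalues of $-\mathrm{i}\Omega'$, hence lie on the imaginary axis. The $co$ triple $(\hat A_{co},\hat B_{co},\hat C_{co})$ inherits $\hat A_{co}+\hat A_{co}^\dagger+\hat C_{co}^\dagger\hat C_{co}=\mathbf{0}$ and $\hat B_{co}=\hat C_{co}^\dagger$ from the same block structure, so it is itself a quantum linear passive system; being controllable and observable by construction, part (i) forces it to be Hurwitz, so its poles lie in the open left half-plane. (The generalized-eigenvector subtlety at the imaginary axis is immaterial: an honest eigenvector already suffices everywhere the computational fact is invoked.)
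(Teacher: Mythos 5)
Your proof is correct, and it supplies more than the paper does: the paper states Proposition \ref{prop:3nov} without proof, deferring to the realization-theory references \cite{GZ15,ZGPG18}, and your argument is a clean, self-contained version of the standard derivation used there --- the eigenvector identity $2\,\mathrm{Re}(\lambda)\|x\|^2=-\|Cx\|^2$ extracted from $A+A^\dagger=-C^\dagger C$, combined with the identification of the unobservable subspace with the orthogonal complement of the reachable one. Two small points are worth flagging. First, with the paper's matrices in \eqref{ABC} one actually has $B=-C^\dagger$ rather than $B=C^\dagger$ (the off-diagonal block of \eqref{eq:prl-111} as printed appears to carry a sign slip); this changes nothing in your argument, since $\mathrm{Im}[B\ \ AB\ \cdots]$ and the kernel condition $B^\dagger w=0\Leftrightarrow Cw=0$ are insensitive to the sign. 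Second, your induction $(A^\dagger)^kx=(-1)^kA^kx$ for $x\in\mathcal{N}$ only delivers the inclusion $\mathcal{N}\subseteq\mathcal{R}^\perp$; the reverse inclusion requires the mirror-image induction ($A^kx=(-1)^k(A^\dagger)^kx$ for $x\in\mathcal{R}^\perp=\bigcap_k\ker\bigl(C(A^\dagger)^k\bigr)$, using $A=-A^\dagger-C^\dagger C$ together with the $A^\dagger$-invariance of $\mathcal{R}^\perp$), which is the same one-line computation run backwards but should be stated, since the equality $\mathcal{R}^\perp=\mathcal{N}$ is precisely what kills the $c\bar{o}$ and $\bar{c}o$ blocks in part (ii). With that sentence added, parts (i)--(iii) all follow exactly as you describe, and the orthonormality of the adapted basis correctly guarantees that the restricted triples inherit the passivity identities.
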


%
%

\bibliographystyle{IEEEtran}

\end{document}